\numberwithin{equation}{section}
\def \beq {\begin{eqnarray}}
\def \eeq {\end{eqnarray}}
\def \beqn {\begin{eqnarray*}}
\def \eeqn {\end{eqnarray*}}
\newtheorem{theorem}{Theorem}[section]
\newtheorem{itlemma}[theorem]{Lemma}
\newtheorem{itproposition}[theorem]{Proposition}
\newtheorem{itcorollary}[theorem]{Corollary}
\newtheorem{itremark}[theorem]{Remark}
\newtheorem{itdefinition}[theorem]{Definition}
\newtheorem{itexample}[theorem]{Example}
\newtheorem{itclaim}[theorem]{Claim}
\newtheorem{itfact}[theorem]{Fact}
\newtheorem{itassumption}[theorem]{Assumption}
\newenvironment{fact}{\begin{itfact}\rm}{\end{itfact}}
\newenvironment{claim}{\begin{itclaim}\rm}{\end{itclaim}}
\newenvironment{lemma}{\begin{itlemma}}{\end{itlemma}}
\newenvironment{remark}{\begin{itremark}\rm}{\end{itremark}}
\newenvironment{corollary}{\begin{itcorollary}}{\end{itcorollary}}
\newenvironment{proposition}{\begin{itproposition}}{\end{itproposition}}
\newenvironment{definition}{\begin{itdefinition}\rm}{\end{itdefinition}}
\newenvironment{example}{\begin{itexample}\rm}{\end{itexample}}
\newenvironment{assumption}{\begin{itassumption}}{\end{itassumption}}
\newcommand{\be}[1]{\begin{equation}\label{#1}}
\newcommand{\ee}{\end{equation}}
\newcommand{\bl}[1]{\begin{lemma}\label{#1}}
\newcommand{\br}[1]{\begin{remark}\label{#1}}
\newcommand{\brs}[1]{\begin{remarks}\label{#1}}
\newcommand{\bt}[1]{\begin{theorem}\label{#1}}
\newcommand{\bd}[1]{\begin{definition}\label{#1}}
\newcommand{\bp}[1]{\begin{proposition}\label{#1}}
\newcommand{\bc}[1]{\begin{corollary}\label{#1}}
\newcommand{\bfact}[1]{\begin{fact}\label{#1}.}
\newcommand{\bex}[1]{\begin{example}\label{#1}}
\newcommand{\ec}{\end{corollary}}
\newcommand{\efact}{\end{fact}}
\newcommand{\eex}{\end{example}}
\newcommand{\el}{\end{lemma}}
\newcommand{\er}{\end{remark}}
\newcommand{\ers}{\end{remarks}}
\newcommand{\et}{\end{theorem}}
\newcommand{\ed}{\end{definition}}
\newcommand{\ep}{\end{proposition}}
\newcommand{\epr}{\end{proof}}
\newcommand{\bpr}{\begin{proof}}
\newcommand{\bcl}[1]{\begin{claim}\label{#1}}
\newcommand{\ecl}{\end{claim}}
\newcommand{\bas}[1]{\begin{assumption}\label{#1}}
\newcommand{\eas}{\end{assumption}}
\newcommand{\ecs}{\end{corollary}}
\newcommand{\eers}{\end{exercise}}
\newcommand{\eexs}{\end{example}}
\newcommand{\eems}{\end{example}}
\newcommand{\els}{\end{lemma}}
\newcommand{\eles}{\end{lemmaex}}
\newcommand{\ets}{\end{theorem}}
\newcommand{\eds}{\end{definition}}
\newcommand{\eps}{\end{proposition}}
\newcommand{\bi}{\begin{itemize}}
\newcommand{\ei}{\end{itemize}}
\newcommand{\ben}{\begin{enumerate}}
\newcommand{\een}{\end{enumerate}}
\def\vbar{\mathchoice{\vrule height6.3ptdepth-.5ptwidth.8pt\kern-.8pt}
   {\vrule height6.3ptdepth-.5ptwidth.8pt\kern-.8pt}
   {\vrule height4.1ptdepth-.35ptwidth.6pt\kern-.6pt}
   {\vrule height3.1ptdepth-.25ptwidth.5pt\kern-.5pt}}
\def\fudge{\mathchoice{}{}{\mkern.5mu}{\mkern.8mu}}
\def\bbc#1#2{{\rm \mkern#2mu\vbar\mkern-#2mu#1}}
\def\bbb#1{{\rm I\mkern-3.5mu #1}}
\def\bba#1#2{{\rm #1\mkern-#2mu\fudge #1}}
\def\bb#1{{\count4=`#1 \advance\count4by-64 \ifcase\count4\or\bba A{11.5}\or
   \bbb B\or\bbc C{5}\or\bbb D\or\bbb E\or\bbb F \or\bbc G{5}\or\bbb H\or
   \bbb I\or\bbc J{3}\or\bbb K\or\bbb L \or\bbb M\or\bbb N\or\bbc O{5} \or
   \bbb P\or\bbc Q{5}\or\bbb R\or\bbc S{4.2}\or\bba T{10.5}\or\bbc U{5}\or
   \bba V{12}\or\bba W{16.5}\or\bba X{11}\or\bba Y{11.7}\or\bba Z{7.5}\fi}}
\newtheorem*{rep@theorem}{\rep@title} \newcommand{\newreptheorem}[2]{%
\newenvironment{rep#1}[1]{%
\def\rep@title{\bf #2 \ref{##1} }%
\begin{rep@theorem} }%
{\end{rep@theorem} } }
\def \R {{\mathbb R}}
\def \N {{\mathbb N}}
\def \PR {{\mathbb P}}
\def \E {{\mathbb E}}
\def \s {y}
\newcommand{\ba}[1]{\addtocounter{for}{1} \begin{eqnarray}\label{#1}}
\newcommand{\ea}{\end{eqnarray}}
\def\sqr#1#2{{\vcenter{\vbox{\hrule height .#2pt
                             \hbox{\vrule width .#2pt height#1pt \kern#1pt
                                   \vrule width .#2pt}
                             \hrule height .#2pt}}}}
\def\pmb#1{\setbox0=\hbox{#1}%
   \kern-.025em\copy0\kern-\wd0
   \kern.05em\copy0\kern-\wd0
   \kern-.025em\raise.0433em\box0 }
\def\sqr#1#2{{\vcenter{\vbox{\hrule height.#2pt
     \hbox{\vrule width.#2pt height#1pt \kern#1pt
   \vrule width.#2pt}\hrule height.#2pt}}}}
\def\s{\sigma}
\def\g{\gamma}
\def\a{\alpha}
\def\cal{\mathcal}
\newenvironment{myenumerate}{%
\begin{list}{\labelenumi}
	{%
	\setlength{\itemsep}{0.4em}%
	\setlength{\topsep}{0.5em}%
	\setlength\leftmargin{2.6em}%
	\setlength\labelwidth{2.15em}%
	\setlength{\labelsep}{0.45em}%
	\usecounter{enumi}%
	}%
	}%
{\end{list}
}
\renewenvironment{enumerate}{
\renewcommand{\theenumi}{\arabic{enumi}}%
\renewcommand{\labelenumi}{{\rm(\theenumi)}}%
\begin{myenumerate}}%
{\end{myenumerate}}
\newenvironment{myitemize}{%
\begin{list}{$\bullet$}%
 	{%
	\setlength{\itemsep}{0.4em}%
	\setlength{\topsep}{0.5em}%
	\setlength\leftmargin{2.6em}%
	\setlength\labelwidth{2.15em}%
	\setlength{\labelsep}{0.45em}%
	}%
	}%
{\end{list}}
\renewenvironment{itemize}{
\begin{myitemize}}%
{\end{myitemize}}
\def\dd{\mathrm{d}}
\newcommand{\Keywords}[1]{\par\noindent 
{\small{\em Keywords\/}: #1}}
\newcommand{\classification}[1]{\par\noindent 
{\small{\em 2010 Mathematics Subject Classification.\/}: #1}}
\title{A multivariate model for financial indices and an algorithm for detection of jumps in the volatility}
\author{ \textsc{Mario Bonino} \thanks{%
Dipartimento di Matematica Pura ed Applicata, Universit\'a  degli Studi di Padova, via Trieste 63, I-35121 Padova, Italy.
{\texttt{mario.bonino@outlook.com}}
. }\smallskip\\
\textsc{Matteo Camelia}\thanks{%
Dipartimento di Matematica Pura ed Applicata, Universit\'a  degli Studi di Padova, via Trieste 63, I-35121 Padova, Italy.
{\texttt{matteo.camelia@gmail.com}}. }\smallskip\\
\textsc{Paolo Pigato}\thanks{%
INRIA Nancy \& IECL Campus Scientifique, BP 70239,  54506 Vandoeuvre-L\'es-Nancy - Cedex, France \texttt{%
paolo.pigato@inria.fr} (corresponding author)}\smallskip\\}
\begin{document}
\maketitle
\begin{abstract}
\noindent
We consider a mean-reverting stochastic volatility model which satisfies some relevant stylized facts of financial markets. We introduce an algorithm for the detection of peaks in the volatility profile, that we apply to the time series of Dow Jones Industrial Average and Financial Times Stock Exchange 100 in the period 1984-2013. Based on empirical results, we propose a bivariate version of the model, for which we find an explicit expression for the decay over time of cross-asset correlations between absolute returns. We compare our theoretical predictions with empirical estimates on the same financial time series, finding an excellent agreement. 
\smallskip
\Keywords{Cross-Correlations, Jump Detection, Stochastic
Volatility, Financial Time Series, Long Memory}
\classification{60G44, 91B25, 91G70}
\end{abstract}

\section{Introduction}  

In the last two decades a number of researchers has shown an increasing interest in the field of economics and finance and their links with statistical mechanics. Many interesting phenomena arise when looking at financial data with mathematical tools coming from statistical physics, this being motivated by the fact that a financial market is somehow analogous to a physical ``complex system", being the result of the interactions of a huge number of agents. What we look at is not the behavior of the single agent, but some macroscopic quantity that we consider important. This new viewpoint has led to the discovery of some striking empirical properties, detected in various types of financial markets, considered now as \emph{stylized facts} of these markets. Examples of such facts are scaling properties, auto-similarity, properties of the volatility profile, such as peaks and clustering, and long range dependence.

In this paper we deal with the two last phenomena, but our point of view comes from mathematical finance more than statistical mechanics. We do not look at the microscopic behavior, but directly at the macroscopic quantities mentioned above, even though we can suppose that the large-scale phenomena under study have their origin in some small-scale interactions. For this purpose we work in the framework of continuous-time stochastic volatility models. More precisely, the market models that we use in this article are mean reverting stochastic volatility models, with a volatility driven by a jump process. This means that our process for the detrended log-price evolves through $dX_t=\s_t dB_t$, where $B$ is a Brownian motion and the volatility $\s=(\s_t)_{t\in \R}$ is the square root of the stationary solution of a SDE of the following form:
\begin{equation}
\label{stomeanrev}
d \s_t^2 = -f(\s_t^2) dt + d L_t.
\end{equation}
The function $f$, what we call ``mean reversion", has the role of pushing the volatility back to its equilibrium value, whereas $L=(L_t)_t$ is a process which models the noise in the volatility, and it is often taken as a pure jump process (see \cite{fasen2006extremal, kluppelberg2006continuous, Barndorff-Nielsen:2001}). If $\s$ is an independent process with paths in $L^2_{loc}(\R)$ a.s., the process $X$ can be viewed as a random time-change of Brownian motion: $X_t= W_{I(t)}$, where $I(t)=\int_0^t \s^2_s ds$. An example of such process is the model presented in \cite{acdp}, that will be the main focus of this article. The introduction of this model is justified by its capability to account a number of the stylized facts mentioned above, namely: the crossover in the log-return distribution from a power-law
to a Gaussian behavior, the slow decay in the volatility autocorrelation, the diffusive scaling and the multiscaling of moments, while keeping a simple formulation and explicit dependence on the parameters. We consider here the two following issues:
an algorithm for the detection of shocks in the market (peaks in the volatility profile) and a study of a bivariate version of the model. These two aspects are linked by the fact that the cross-correlation between two indices is highly dependent on the correlation between the jumps of the volatility processes.

After a sinthetic presentation of the model introduced in \cite{acdp}, we propose our algorithm for the detection of jumps in the volatility. The problem of finding shocks in financial time series is a classical one. For example, GARCH models (Generalized Autoregressive Conditional Heteroskedasticity, \cite{Bollerslev:1986}) are widely used, but in practice ``volatility seems to behave more like a jump process, where it fluctuates around some value for an extended period of time, before undergoing an abrupt change" (\cite{ross}). To adress this issue, regime-switching GARCH models have been developed (\cite{gray,he_maheu}), but they can be hard to implement. Therefore, a more common approach is to use an approximate procedure, the so-called ICSS-GARCH algorithm, introduced in \cite{icss}. 
When working in continuous time, a classical tool for estimating volatility is to use approximations of quadratic variation of the price process via sums of squared returns (see \cite{BNS}, or the recent book \cite{JacodProtter} for theoretical results). 
Here we work in this framework, with the specific aim of estimating shocks. Therefore, our estimator uses squared returns, as the ICSS-GARCH algorithm. However, the ICSS-GARCH algorithm works well under the assumption that the returns are normally distributed. Our algorithm, on the contrary, is based on geometrical considerations on the quadratic variation of the price process.

We use our algorithm on the empirical time series of the Dow Jones Industrial Average (DJIA) and the Financial Times Stock Exchange (FTSE) 100, from 1984 to 2013. Some heuristic considerations on the output confirm its validity in the detection of jumps. We find that the majority of the peaks of the volatility estimated by the algorithm are shared by the two indices.
Motivated by the fact that most of the shocks are common to the two markets, we consider a bivariate version of the model, where the joint process of shocks is given by correlated Poisson  processes. This is a main ingredient in our modeling, since the long range dependence heavily relies, through the volatility process, on the shock times. Indeed, defining
\begin{align*}
dX_t&=\s^X_t dB^X_t,\quad &d (\s^X_t)^2 = -f((\s^X_t)^2) dt + d L^X_t, \\
dY_t&=\s^Y_t dB^Y_t,\quad &d (\s^Y_t)^2 = -f((\s^Y_t)^2) dt + d L^Y_t,
\end{align*}
it is easy to show under very weak hypothesis on the joint volatility $(\s^X_t,\s^Y_t)_{t\in\R}$ that
$$
\lim_{h\downarrow 0} corr (|X_h-X_0|,|Y_{t+h}-Y_t|)=corr(\s^X_0,\s^Y_t).
$$ 
If the volatilities are of the precise form considered in \cite{acdp}, explicit computations are possible and the evolution of $(\s^X,\s^Y)$ depends only on the jumps of $L^X$ and $L^Y$. The correlation of both increments and absolute increments of two assets at a certain time has been widely studied, especially because of its direct link with systemic risk and portfolio management (see for instance \cite{Embrechts:2002, Brownlees:2012}), but here we deal with something more peculiar. We consider the cross-correlation of absolute increments at different times, and compute how this correlation decays as the time difference increases. This issue has been addressed by 
Podobnik et al. in \cite{Podobnik:2007}, where they analyze the Dow Jones Industrial and the S\&P500 indices, and in \cite{Podobnik:2010,Podobnik:2011}, where long range cross-correlations between magnitudes are found in a number of studies including nanodevices, atmospheric
geophysics, seismology and finance. In our framework, we find this explicit formula for the decay of cross-asset correlations between absolute returns, depending on the time lag (see Corollary \ref{corcorr}):
\[
\lim_{h\downarrow 0} corr (|X_h-X_0|,|Y_{t+h}-Y_t|)=\frac{2}{\pi} \frac{Cov\left((S^X)^{D^X-1/2},(\lambda^Y t+S^Y)^{D^Y-1/2}\right)}
{\sqrt{Var(|N| S^{D^X-1/2})Var(|N| S^{D^Y-1/2})}}e^{-\lambda^Y t}
\]
The quantities involved are constant parameters of the volatilities $\s^X$ and $\s^Y$, except from $S^X$ and $S^Y$, which are correlated exponential variables coming from the jump process $L=(L^X,L^Y)$, and the two independent random variables $N$ (standard Gaussian) and $S$ (exponential). We compare this result to the two empirical time series of FTSE and DJIA, finding an excellent agreement between predictions of the model and empirical findings. In particular we find that from both model and empirical data the decay of autocorrelations and cross-correlations is almost coincident, and in particular it is slow over time, confirming that this is a long-memory processes.
On the other hand, empirically we find a non-significant cross-correlation between returns of FTSE and DJIA, even for very small time lags, and this is consistent with the model as well. 
\begin{figure}[!ht]
\centering
\caption{Decay of cross-correlations}
\includegraphics[width=0.46\textwidth]{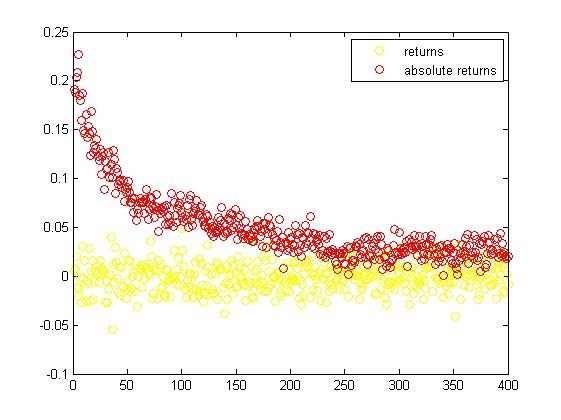}
\label{crosscorr}
\end{figure}
This is not surprising, since for both indices there are no long-range autocorrelations of returns, and this is easily seen to be consistent with our model. In contrast, as already said, the decay of cross-correlation of absolute returns is very slow (see Fig. \ref{crosscorr}). 

The fact that our model behaves as real financial time series in all of these aspects is a remakable validation. We mention that the satistical analysis performed by Podobnik et al. in \cite{Podobnik:2007,Podobnik:2011} lead to results analogous to ours, concerning also the similarity in the decay of autocorrelations and cross-asset correlations.

In section \ref{sectdef} we present the model introduced in \cite{acdp} and explain the algorithm for the detection of jumps in the volatility. In section \ref{sectcorr} we study the bivariate model.
In section \ref{sectproofs} we prove the results motivating the algorithm and the results on the bivariate model.

\section{Definition of the model and detection of jumps}
\label{sectdef}

In this chapter we describe the model and state properties and results related to stylized facts. 
Then, based on properties of the quadratic variation of our price process, we propose an algorithm for the detection of shocks in the market.

\subsection{Definition of the model}
\label{def}

Given three real numbers $D\in(0,1/2]$, $\lambda\in(0,\infty)$, $\bar{\s}\in(0,\infty)$, the model is defined upon two sources of randomness:
\begin{itemize}
\item a Brownian motion $W=\left(W_t\right)_{t\geq 0}$;
\item a Poisson point process $\mathcal{T}=(\tau_n)_{n\in\mathbb{Z}}$ of rate  $\lambda$ on $\mathbb{R}$.
\end{itemize}
We suppose $W$ and $\mathcal{T}$ independent. By convention we label the points of $\mathcal{T}$ so that $\tau_0<0<\tau_1$. For $t\geq 0$, we define
$$
i(t) := \sup\{n\geq 0 : \tau_n \leq t\}= \# \{\mathcal{T} \cap (0, t]\}.
$$
$i(t)$ is the number of positive times in the Poisson process before $t$, so that $\tau_{i(t)}$ is the location of the last point in $\mathcal{T}$ before $t$. We introduce the process $I=(I_t)_{t\geq 0}$ defining
\begin{equation}
\label{timechange}
I_t=\bar{\s}^2\left[(t-\tau_{i(t)})^{2D}+\sum^{i(t)}_{k=1}(\tau_k-\tau_{k-1})^{2D}-(-\tau_0)^{2D}\right]
\end{equation}
where we agree that the sum in the right hand side is $0$ if $i(t)=0$. Now we define the process which is the model for the detrended log price as
\begin{equation}
X_t=W_{I(t)}.
\label{def1}
\end{equation}
Observe that $I$ is a strictly increasing process with absolutely continuous paths, and it is independent of the Brownian motion $W$. Thus this model may be viewed as an independent random time change of a Brownian motion.

We shortly give a motivation for this definition. Remark that for $D=1/2$ the model reduces to Black \& Scholes with volatility $\bar{\s}$. For $D<1/2$, the introduction of a time inhomogeneity $t\rightarrow t^{2D}$ at times $\tau_n$ is meant to represent the \emph{trading time} of a financial time series, where at random times there are shocks in the market, modeled by our Poisson point process. The reaction of the market is an acceleration of the dynamics immediately after the shock, and a gradual slowing down at later times, until a new shock accelerates the dynamics again. This behavior is due to the function $t\rightarrow t^{2D}$, $D\in(0,1/2]$, which is steep for $t$ close to $0$ and bends down for increasing $t$.

\begin{figure}[htbp]
\caption{Time inhomogeneity}
\begin{center}
\includegraphics[width=7.cm]{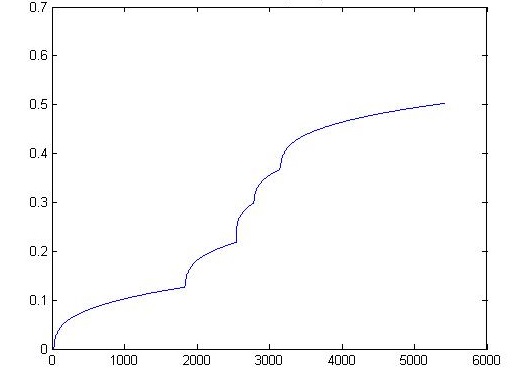}
\end{center}
\end{figure}

The definition of the model as a time changed Brownian Motion implies that we can equivalently express it as a a stochastic volatility model, where the volatility is
\[
\s_t=\sqrt{I'(t)}=\sqrt{2D}\bar{\s} (t-\tau_{i(t)})^{D-1/2},
\]
and the evolution of $X$ is given by $dX_t=\s_t dB_t$. To write the volatility as solution of a stochastic differential equation of the form \eqref{stomeanrev}, we can define it as the the stationary solution of
\[
d(\s_t^2)=-\alpha (\s_t^2)^\gamma dt +\infty \,d i(t), 
\]
where the constants are
$$
\gamma=2+\frac{2D}{1-2D}>2,\quad \alpha=\frac{1-2D}{(2D)^{1/(1-2D)}}\frac{1}{\bar{\s}^{2/(1-2D)}}.
$$
This process is well defined, since after the infinite jumps the super-linear drift term instantaneously produces a finite pathwise solution. We refer to \cite{articolomultiscaling} for the details of the correspondence between time change and stochastic volatility in this framework, for a wider class of stochastic volatility models.

\br{}
In the most general version of this model the parameter $\bar{\s}$ is not constant. A sequence of random variables $(\bar{\s}_n)_{n\in\N}$ is simulated, and each of them is associated to the corresponding jump. The results presented here are still valid in this case, with a slightly more complicated formulation. We assume $\bar{\s}$ constant in this work, since calibration on data coming from financial time series leads in any case to this type of choice. \er

\subsection{Main properties}
We briefly recall some properties of the process $X$.
For proofs, more detailed statements and some additional considerations we refer to \cite{acdp}.
\begin{proposition}[Basic Properties] \label{basicpr}
Let $X$ be the process defined in \eqref{def1}. The following assertions hold:
\begin{enumerate}
  \item $X$ has stationary increments.
\item $X$ is a zero-mean, continuous, square-integrable martingale, with quadratic variation $\langle X\rangle_t=I_t$.
\item The distributions of the increments of $X$ is ergodic.
\item 
$\E(|X_t|^q) < \infty$ for some (and hence any) $t > 0,\,q\in [0,\infty)$.
\end{enumerate}
\label{procprop}
\end{proposition}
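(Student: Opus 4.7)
The overall plan is to reduce each assertion to a property of the time change $I$, which depends only on the Poisson point process $\mathcal{T}$, and then to exploit the independence of $W$ and $\mathcal{T}$ together with the classical stationarity, ergodicity and moment properties of Brownian motion. For item 1, I would check directly from \eqref{timechange} that $I_{t+h}-I_h$ can be written as the same functional evaluated on the shifted configuration $\mathcal{T}-h$; since a Poisson point process on $\mathbb{R}$ is invariant under deterministic translations, $(I_{t+h}-I_h)_{t\ge 0}$ has the same law as $(I_t)_{t\ge 0}$. Combined with the stationarity of $W$ and its independence from $\mathcal{T}$, this yields stationary increments of $X$.

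For item 2, $I$ is strictly increasing, absolutely continuous and independent of $W$, so a standard time-change argument (optional sampling of $W$ at the stopping times $I_t$ in the filtration of $W$ enlarged by $\sigma(I)$, in which $W$ remains a Brownian motion by independence) identifies $X$ as a continuous local martingale with $\langle X\rangle_t=I_t$. To upgrade this to a true square-integrable martingale I would use the bound
\[
I_t\;\le\;\sigma^2 t^{2D}\,(i(t)+1),
\]
which follows from the concavity of $x\mapsto x^{2D}$ (used to absorb the boundary terms involving $\tau_0$) together with the trivial estimate $\tau_k-\tau_{k-1}\le t$ for $1\le k\le i(t)$. Since $i(t)\sim\mathrm{Poisson}(\lambda t)$ has moments of every order, $\mathbb{E}[I_t^{q/2}]<\infty$ for each $q\ge 0$. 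Conditioning on $I$ then gives $\mathbb{E}|X_t|^q=c_q\,\mathbb{E}[I_t^{q/2}]$ with $c_q$ the $q$-th absolute moment of a standard Gaussian, which both proves item 4 and, specialized to $q=2$, upgrades item 2 to the square-integrable martingale property.

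For item 3 I would rely on the ergodicity of the translation flow on the Poisson point process, combined with the independence and ergodicity of the Wiener shift. The increment process of $X$ can be written as a measurable factor of the joint dynamical system $(\mathcal{T},W)$ equipped with a suitable translation flow, and any shift-invariant event on the path space of the increments of $X$ pulls back to a shift-invariant event for this coupled flow. The main obstacle of the whole proposition lies precisely here: one has to set up the shift action on $(\mathcal{T},W)$ so that the projection onto the increments of $X$ is equivariant, and invoke that the product of the ergodic Poisson flow and the ergodic Wiener flow is itself ergodic (a standard but not entirely trivial fact, since ergodicity is not automatically preserved under products unless one of the factors is mixing, which is indeed the case for the Poisson flow). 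Once this measure-theoretic scaffolding is in place, triviality of any such invariant event follows automatically and item 3 is proven.
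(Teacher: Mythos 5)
The paper itself does not prove this proposition; it states it and refers to \cite{Andreoli:2012} for the proofs, so your attempt can only be judged on its own merits. Your treatment of items 1, 2 and 4 is correct and is the standard route: translation invariance of the Poisson point process gives stationarity of the increments of $I$ (and hence of $X$, by conditioning on $\mathcal{T}$); the enlargement/time-change argument gives the continuous local martingale property with $\langle X\rangle_t=I_t$; and the bound $I_t\le\sigma^2t^{2D}(i(t)+1)$ together with Poisson moments and conditional Gaussianity yields all moments. (A small imprecision: your auxiliary claim ``$\tau_k-\tau_{k-1}\le t$ for $1\le k\le i(t)$'' fails for $k=1$ because $\tau_0<0$; that term is precisely one of the boundary terms that must be absorbed by subadditivity of $x\mapsto x^{2D}$, as you do for $(-\tau_0)^{2D}$, so the bound itself survives.)

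The gap is in item 3, and you have correctly located the difficulty without resolving it. The shift on $(\mathcal{T},W)$ under which the increment process of $X$ is equivariant is $\theta_h(\mathcal{T},W)=(\mathcal{T}-h,\,(W_{I_h+u}-W_{I_h})_{u\ge 0})$: the amount by which $W$ must be shifted is the random time $I_h$, which depends on $\mathcal{T}$. This is a skew product, not a direct product, so the statement ``a product of an ergodic flow with a mixing flow is ergodic'' does not apply as written. Two standard repairs exist. One is to pass to the representation $X_t=\int_0^t v_s\,dB_s$ with $v_s=\sqrt{I'(s)}$ and $B_t=\int_0^t (I'(s))^{-1/2}\,dW_{I_s}$; this $B$ is a Brownian motion independent of $\mathcal{T}$, the relevant shift on $(\mathcal{T},B)$ really is the direct product of the Poisson translation flow and the Wiener increment shift, and your product-ergodicity argument then goes through. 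The other is to prove mixing of the increment process directly: conditionally on $\mathcal{T}$, increments of $X$ over disjoint time blocks are independent, so correlations of bounded functionals of two distant blocks reduce to correlations of functionals of $\mathcal{T}$ near those blocks, which vanish by mixing of the Poisson process. Either fix is routine, but as written your item 3 stops one step short of a proof.
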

We are now ready to state some results, important because they establish a link between our model and the stylized fact mentioned in the introduction. The process $X$ defined in \eqref{def1} is consistent with important facts empirically detected in many  real time series, namely: diffusive scaling of returns, multiscaling of moments, slow decay of volatility autocorrelation. 

The first result shows that the increments $(X_{t+h} - X_t)$ have an approximate diffusive scaling both when $h\downarrow 0$, with a heavy-tailed limit distribution,
and when $h \uparrow \infty$, with a normal limit distribution. This is a precise mathematical
formulation of a crossover phenomenon in the log-return distribution,
from approximately heavy-tailed (for small time) to approximately Gaussian (for large time).

\begin{theorem}[Diffusive scaling] \label{th:scaling}
The following convergences in distribution hold
for any choice of the parameters $D, \lambda, \bar{\s}$.
\begin{itemize}
\item Small-time diffusive scaling:
\begin{align} \label{eq:convdist}
	\frac{(X_{t+h} - X_t)}{\sqrt h} \ \xrightarrow[\ h \downarrow 0\ ]{d} \
	f(x) \, \dd x \ := \ \text{law of } \ \bar{\s}\big(\sqrt{2D} \,
	\lambda^{\frac{1}{2} - D} \big) \, S^{D-\frac{1}{2}} \, W_1 \,,
\end{align}
where
$S \sim Exp(1)$ and $W_1 \sim \mathcal{N}(0,1)$
are independent random variables. The function $f$ is 
$$
f(x)=\int_0^\infty dt\, \lambda e^{-\lambda t} \frac{t^{1/2-D}}{\bar{\s}\sqrt{4D\pi}}exp\left(-\frac{t^{1-2D}x^2}{4D\bar{\s}^2}\right).
$$

\item Large-time diffusive scaling:
\begin{align} \label{eq:convdist2}
	\frac{(X_{t+h} - X_t)}{\sqrt h} \ \xrightarrow[\ h \uparrow \infty\ ]{d} \
	\frac{e^{-x^2/(2c^2)}}{\sqrt{2\pi} c} \, \dd x \, = \, \mathcal{N}(0,c^2) \,,
	\qquad
	c^2 =  \bar{\s}^2 \,\lambda^{1-2D}\, \Gamma(2D+1) \,,
\end{align}
where $\Gamma(\alpha) := \int_0^\infty x^{\alpha-1} e^{-x} \dd x$
denotes Euler's Gamma function.
\end{itemize}
\end{theorem}

The density $f$, when $D < \frac{1}{2}$, has \emph{power-law tails}:
\[
\E_f(|x|^q)=\infty \Leftrightarrow q\geq q^*:=(1/2-D)^{-1}. 
\]
The function $f$, which describes the asymptotic law, for $h\downarrow 0$, of $\frac{X_{t+h}-X_t}{\sqrt{h}}$, has a different tail behavior from the density of $X_{t+h}-X_t$, for fixed $h$ (cf. Proposition \ref{procprop} point 4).
This feature of $f$ is linked to another property of our model: the multiscaling of moments. Let us define the $q-th$ moment of the log returns, at time scale $h$:
\[
m_q(h):=\E\left(|X_{t+h}-X_t|^q\right)=\E\left(|X_h|^q\right)
\] 
the last equality holding for the stationarity of the increments.
Because of the diffusive scaling properties (Theorem \ref{th:scaling}), we would expect $m_q(h)$ to approximate in some sense $h^\frac{q}{2} \int x^q f(x) dx = C_q\, h^\frac{q}{2}$, for $h\downarrow 0$. This is actually true for $q< q^*$, that is, for $q$ such that the $q-th$ moment of the limit distribution is finite. For $q\geq q^*$, the $q-th$ moment of the limit distribution is not finite, and it turns out that a faster scaling holds, namely $m_q(h)\approx h^{Dq+1}$. This transition in the scaling of $m_q(h)$ is known as multiscaling of moments, a property empirically detected in many time series, in particular in financial series. The following theorem states that for  this model the multiscaling exponent is a piecewise linear function of $q$. 
In \cite{articolomultiscaling} the problem of multiscaling in more general stochastic volatility models is considered, finding that an analogous behavior is common to a much wider class.

\begin{theorem}[Multi-scaling of moments]
\label{th:multi}
For $q>0$ the $q-th$ moment of log returns $m_q(h)$ has the following asymptotic behavior as $h\downarrow 0$:

\[
m_q(h)\sim\left\{
\begin{aligned}
&C_q h^\frac{q}{2},&              \quad               &if\, q<q^*\\
&C_q h^\frac{q}{2}\log\left(\frac{1}{h}\right),&      &if\, q=q^*\\
&C_q h^{Dq+1},&                                       &if\, q>q^*
\end{aligned}\right.
\]
for some constants $C_q \in (0,\infty)$ (whose explicit expression can be found in \cite{acdp}).
As a consequence, the scaling exponent $A(q)$ is
\[
A(q)=\lim_{h\downarrow 0} \frac{\log m_q(h)}{\log h}=
\left\{
\begin{aligned}
&\frac{q}{2}&\, &if\, q\leq q^*\\
&Dq+1&    &if\, q\geq q^*
\end{aligned}\right.
\label{computeA}
\]
\end{theorem}
We now state a result concerning the volatility autocorrelation of the process $X$, that is the correlations of absolute values of returns at a given time distance. Recall that the correlation coefficient of two random variables $X$ and $Y$ is 
$$
\rho(X,Y)=\frac{Cov(X,Y)}{\sqrt{Var(X)Var(Y)}}.
$$
For the process $X$, introduce $\xi=(\xi_t)_{t\geq 0}$, the process of absolute values of increments, for $h$ fixed: $\xi_t=|X_{t+h}-X_t|$. Then the volatility autocorrelation of $X$ is
\[
\rho(t-s)=\lim_{h\downarrow 0}\rho(\xi_s,\xi_t)=\frac{Cov(\xi_s,\xi_t)}{\sqrt{Var(\xi_s)Var(\xi_t)}}
\]
Indeed, being the process stationary, the quantity we have defined above depends only on the time difference $t-s$.

\begin{theorem}[Volatility autocorrelaton]
\label{theoautocorr}
For $t \geq 0$,
\[
\rho(t)=
\frac{2}{\pi} \frac{Cov\left(S^{D-1/2},(\lambda t+S)^{D-1/2}\right)}{Var(|N| S^{D-1/2})}e^{-\lambda t}
\]
where $S$ is an exponential variable with parameter $1$, $N$ is a standard normal variable and they are mutually independent.
\label{volatility}
\end{theorem}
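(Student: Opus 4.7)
The plan is to condition on the Poisson process $\mathcal{T}$ and reduce the limit of $\rho(t)$ to a computation on the volatility process $v$. Given $\mathcal{T}$, $I$ is deterministic and, for $h$ small enough that $(I(s),I(s+h))$ and $(I(t),I(t+h))$ are disjoint (which holds for fixed $t-s>0$ and $h\downarrow 0$), the increments $X_{s+h}-X_s$ and $X_{t+h}-X_t$ are conditionally independent centered Gaussians with variances $I(s+h)-I(s)$ and $I(t+h)-I(t)$. Hence
\begin{align*}
\E[\xi_s\xi_t]&=\tfrac{2}{\pi}\,\E\bigl[\sqrt{(I(s+h)-I(s))(I(t+h)-I(t))}\bigr],\\
\E[\xi_t]&=\sqrt{\tfrac{2}{\pi}}\,\E\bigl[\sqrt{I(t+h)-I(t)}\bigr],\qquad \E[\xi_t^2]=h\,\E[v_0^2].
\end{align*}
Since $I$ is absolutely continuous with $I'=v^2$, one has $(I(t+h)-I(t))/h\to v_t^2$ a.s.\ as $h\downarrow 0$. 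Dividing by $h$ and passing to the limit should yield
$$
\lim_{h\downarrow 0}\frac{Cov(\xi_s,\xi_t)}{h}=\tfrac{2}{\pi}Cov(v_s,v_t),\qquad \lim_{h\downarrow 0}\frac{Var(\xi_t)}{h}=Var(|N|\,v_t),
$$
so by stationarity $\lim_{h\downarrow 0}\rho(t)=(2/\pi)\,Cov(v_0,v_t)/Var(|N|v_0)$.

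It remains to compute $Cov(v_0,v_t)$. The convention $\tau_0<0<\tau_1$ makes the age $A_0=-\tau_0$ and the forward waiting time $\tau_1$ two independent $\mathrm{Exp}(\lambda)$ variables, with $A_0$ independent of the Poisson points in $(0,\infty)$. Split on $\{\tau_1\le t\}$ and $\{\tau_1>t\}$. On the first event, the strong Markov property at time $\tau_1$ makes $v_t$ independent of $v_0$, so $\E[v_0 v_t\mathbf{1}_{\tau_1\le t}]=\E[v_0]\,\E[v_t\mathbf{1}_{\tau_1\le t}]$. On the second event, which has probability $e^{-\lambda t}$ and is independent of $A_0$, $A_t=A_0+t$ and hence $v_t=\sqrt{2D}\sigma(A_0+t)^{D-1/2}$. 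Combining,
$$
Cov(v_0,v_t)=e^{-\lambda t}\cdot 2D\sigma^2\,Cov\bigl(A_0^{D-1/2},(A_0+t)^{D-1/2}\bigr).
$$
The substitution $A_0=S/\lambda$ with $S\sim\mathrm{Exp}(1)$ extracts a common factor $2D\sigma^2\lambda^{1-2D}$ shared by $Cov(v_0,v_t)$ and $Var(|N|v_0)$, which cancels in the ratio and leaves exactly the claimed expression.

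The main obstacle is the rigorous justification of the interchange of limit and expectation in the first step: one needs uniform integrability of $\sqrt{(I(t+h)-I(t))/h}$ and of the cross term $\sqrt{(I(s+h)-I(s))(I(t+h)-I(t))}/h$. Since $2D<1$, concavity of $u\mapsto u^{2D}$ yields, on the event of no shock in $(t,t+h]$, the pointwise bound $(I(t+h)-I(t))/h\le v_t^2$; combined with $\E[v_0^2]<\infty$ (which holds because $D>0$ gives $2<q^*$), this domain is controlled. On the complementary event, of probability $O(h)$, the same increment is bounded by $O(h^{2D-1})$, giving a contribution of $O(h^{2D})\to 0$. With these bounds dominated convergence delivers the two limits above, completing the argument.
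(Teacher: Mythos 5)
Your proposal is correct and follows essentially the same route as the paper: the paper obtains this theorem as the special case $X=Y$ of Corollary \ref{corcorr}, whose proof (Theorem \ref{thm:covar}) likewise conditions on $\mathcal{T}$ to reduce to $\frac{2}{\pi}Cov\bigl(\sqrt{I_{h}},\sqrt{I_{t+h}-I_t}\bigr)/h$, passes the limit inside via uniform integrability of $I_h/h$ (their Lemma \ref{unif} gives $L^\delta$-boundedness for some $\delta>1$, where you use a concavity domination plus a bad-event estimate), and then splits on $\{i(t)=0\}$ versus $\{i(t)>0\}$ and invokes independence of $\tau_0$ from the points in $(0,\infty)$ --- exactly your split on $\{\tau_1>t\}$ versus $\{\tau_1\le t\}$. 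The only cosmetic differences are that you argue directly in the univariate setting and phrase the memorylessness step via the renewal structure rather than the indicator decomposition.
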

This theorem shows that the decay of volatility autocorrelation is between polynomial and exponential for $t=O(1/\lambda)$, exponential for $t>>1/\lambda$.

\subsection{Jumps and quadratic variation}
\label{heualgo}

We consider now the \emph{quadratic variation} of the price process, also referred to as the \emph{integrated volatility}. The aim of the following considerations is to introduce an algorithm for the detection of relevant jumps in the volatility, which has been proposed for the first time in \cite{tesi_mario}, 
and then analyzed in \cite{tesi_matteo,tesi_paolo}. 
 We know that the quadratic variation of $X$ is given by $I$ (Proposition \ref{basicpr}):
 \[
\langle X \rangle_{t} = I_{t}.
 \]
We recall that the quadratic variation is the limit in probability of the sum of squared increments on shrinking partitions. Therefore, a natural estimator of $I$ is the sum of squared increments of a dense sampling of $X$  (see \cite{JacodProtter} for many results on consistence and speed of convergence of this kind of estimators). 

On the other hand, the process $I$ is piecewise-concave; in fact, we recall that such process is defined by
\[
I_t=\bar{\s}^2\left[(t-\tau_{i(t)})^{2D}+\sum^{i(t)}_{k=1}(\tau_k-\tau_{k-1})^{2D}-(-\tau_0)^{2D}\right]
\]
It is clear that between two consecutive shock times the process is concave. We fix time $T$ and consider the backward difference quotient defined by
\[
Q_{T}(t) := \frac{I_{T} - I_{T-t}}{t}.
\]
This quantity is, conditional on $\mathcal{T}$, increasing up to the last shock time before $T$, and therefore it has a local maximum in $ t = T - \tau_{i(T)}$. Moreover, the derivative of $I_{t}$ is very large after a shock, but it quickly decays over time. Because of that, we expect that $Q_T(s) < Q_T(T-\tau_{i(T)})$ if $ s \in (T-\tau_{i(T)}, T-\tau_{i(T)}-L )$, for some $L >0$. We propose here an algorithm based on the following idea: if we choose $M > 0$  such that $\tau_{i(T) -1}<T-M < \tau_{i(T)} < T$ and $T-M$ is ``closer to $\tau_{i(T)}$ than to $\tau_{i(T) -1}$'', then the global maximum of $Q_T(t)$ in the interval $(T-M, T)$ should be attained at $t = T-\tau_{i(T)}$.  The geometric properties of $Q_T$ motivating this fact are stated and proved in \ref{lemma:geom}. Since a natural estimator for the quadratic variation of $X$ is an average of squared increments of $X_{t}$, we introduce the following estimator:
\[
V_{T}(k) := \frac{1}{k} \sum_{i=1}^{k} \left( X_{T-i+1}-X_{T-i} \right)^{2}
\]
Conditioning on $\cal{T}$, we have
\begin{eqnarray*}
\E \left \{ V_{T}(k) \vert \mathcal{T} \right \} &=& \frac{1}{k} \ \sum_{i=1}^{k} \E \left \{ \left( X_{T-i+1}-X_{T-i} \right)^{2} \vert \mathcal{T}\right \} \\
&=& \frac{1}{k} \ \sum_{i=1}^{k} \E \left \{ \left( W_{I_{T-i+1}}-W_{I_{T-i}} \right)^{2} \vert \mathcal{T} \right \} \\
&=& \frac{1}{k} \ \sum_{i=1}^{k} \left( I_{T}-I_{T-k} \right) \\
&=& Q_{T}(k)
\end{eqnarray*}
Since we cannot observe $Q_T$ directly on historical data, we use $V_{T}$ as an approximation and implement an algorithm for finding the realized jump times.

\subsection{Detection of jumps in the volatility}
\label{algo}
We describe now implementation and application of the algorithm.
We start introducing some notation. The financial index time series will be denoted by $(s_{i})_{0 \leq i \leq N}$, whereas the detrended logarithmic time series will be indicated by $(x_{i})_{250 \leq i \leq N}$, where 
\[
x_{i} := \log(s_{i}) - \bar{d}(i)
\]
and $\bar{d}(i) := \frac{1}{250} \sum_{k=i-250}^{i-1} \log(s_{i})$; we observe that it is not possible to define $x_{i}$ for $i < 250$. We  define $(y(i))_{0 \leq i \leq N}$ to be the corresponding series of trading dates. We also introduce the empirical estimate of $V_{N}$ as
\[
\widehat{V_{N}}(k)  := \frac{1}{k} \sum_{i=1}^{k} \left( x_{N-i+1}-x_{N-i} \right)^{2}
\]
Now, suppose that we want to know when the last shock in the time series occurred. The idea is to choose an appropriate integer $M$ such that $0 < M \leq N$ and see where the sequence $(\widehat{V_{N}}(k))_{N-M \leq k \leq N}$ attains its maximum. This leads us to the following definition.

\begin{definition}
    Let $(s_{i}), (x_{i}), (y_{i}), N, M$ be as above; given an integer $\tilde{N}$ such that $M \leq \tilde{N} \leq N$, we define 
    \[
        \widehat{k}(\tilde{N}, M) := \text{argmax}_{\tilde{N}-M \leq k \leq \tilde{N}} \quad \frac{1}{k} \sum_{i=1}^{k} \left( x_{\tilde{N}-i+1}-x_{\tilde{N}-i} \right)^{2}.
    \]
This quantity is an estimate of the distance of the last shock time before $y_{\tilde{N}}$ from $\tilde{N}$. We define also
    \[
        \widehat{i}(\tilde{N}, M) := \tilde{N} - \widehat{k}(\tilde{N}, M) + 1,
    \]
our estimate of the index of the last shock time estimate, and consequently our estimate of the last shock time before $y_{\tilde{N}}$ is
    \[
        \widehat{\tau}(\tilde{N}, M) := y \left( \widehat{i}(\tilde{N}, M) \right).
    \]
\label{def_estimate}
\end{definition}
It is worth comparing our algorithm with the so called ICSS-GARCH algorithm. Following \cite{ross}, we can describe the ICSS-GARCH algorithm as follows. Given a series of financial returns $r_{1}, \dots , r_{n}$, with mean $0$ we define the cumulative sum of squares $C_{k} = \sum_{i=1}^{k} r^{2}_{i}$ and let
\[
D_{k} = \frac{C_{k}}{C_{n}} - \frac{k}{n}, \qquad 1 \leq k \leq n , \qquad D_{0} = D_{n} = 0
\]
The idea
is that if the sequence $r_{1}, \dots , r_{n}$ has constant variance, then the sequence $D_{1}, \dots , D_{n}$ should oscillate around $0$. However, if there is a shock in the variance, the sequence should exhibit extreme behavior around that point. 

We remark that both algorithms use sums of squared returns to detect volatility shocks. However, the ICSS-GARCH algorithm works well under the assumption that the returns are normally distributed,
but not with heavy-tailed distributions, as proved in \cite{ross}. 
On the contrary, our algorithm is based on geometrical considerations: it locates shocks from the fact that our returns are given by a piecewise-concave time change  of a Brownian motion. The assumption of a piecewise-concave time change is reasonable in the context of stochastic volatility models. Indeed, to reproduce jumps in the volatility, one can introduce a process that makes the volatility strongly increase when a shock occurs, and then slowly decay over time.

For the empirical discussion outlined here, we use the DJIA Index and FTSE Index, from April 2nd, 1984 to July 6th, 2013, so that $N = 7368$. A similar data analysis has been done on the Standard \& Poor's 500 Index, from January 3rd, 1950 to July 23th, 2013, finding analogous results and confirming the validity of the method on aggregate indices.
All the calculations and pictures presented here have been obtained using the software MatLab \cite{MATLAB:2009}. An example of the empirical procedure to estimate the last shock time is given in Figure \ref{1tempofin}.

\begin{figure}[ht!]
    \caption{Plot of the quantity $\widehat{V_{\tilde{N}}}(k)$ for $k = 1, \dots, 2000$ ($M = 2000)$. $\tilde{N}$ has been choosen so that $y_{\tilde{N}}$ is the 10th of May 2011. The peak corresponds to the the 15th of September 2008, the day of the Lehman Brothers bank bankruptcy.}
    \begin{center}
      \includegraphics[scale=.4]{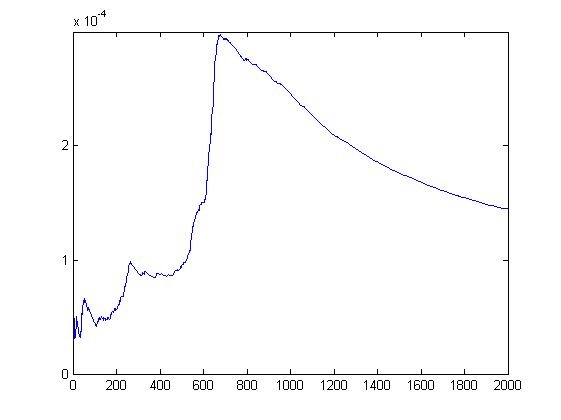}
    \end{center}
    \label{1tempofin}
\end{figure}
To confirm that the  estimate of the shock time is good, we may repeat the procedure approaching the shock time, dropping the last observation, or dropping a number of the last observations. If the estimate of the last shock time is confirmed, then we have a clear indication of the presence of a shock there (see Figure \ref{sovr}--\subref{dj_confirmed}). We remark that, when more than one shock is present on the considered time interval, the most recent is found as the maximum peak of $\hat{V}$ if we take $y_{\tilde{N}}$ close enough to it (again, this is motivated by lemma \ref{lemma:geom}). When we get further, the chosen peak is not necessarily the most recent, as we can see in Figure \ref{sovr}--\subref{2picchi}.
\begin{figure}[!ht]
\centering
\caption{Plot of the quantities $\widehat{V_{\tilde{N}}}(k)$ for $k = 1, \dots, 2000$ ($M = 2000)$, for the DJIA. In each figure we shift $\tilde{N}$ 4 times of 20 working days. 
In \subref{dj_confirmed} $\tilde{N}$ has been chosen so that $y_{\tilde{N}}$ is the 10/05/11(red), the 11/04/11(yellow), the 14/03/11(green) and the 11/02/11(blue). The four maxima are all located the 15/09/08, the day of the Lehman Brothers bank bankruptcy, confirming the presence of a shock there.
In \subref{2picchi} $\tilde{N}$ has been chosen so that $y_{\tilde{N}}$ is the 27/02/12(red), the 27/01/12(yellow), the 28/12/11(green) and the 29/11/11(blue). We can see that when $y_{\tilde{N}}$ is close to the 05/08/11 (European sovereign debt crisis), this date corresponds to the maximum of $\hat{V}$, whereas when we move further the maximum is again on the 15/09/08.}
\label{sovr}
\subfigure[]{
\includegraphics[width=0.46\textwidth]{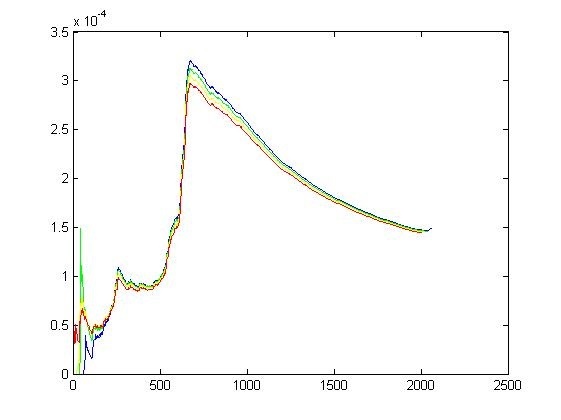}
\label{dj_confirmed}}
\hspace{1mm}
\subfigure[]{
\includegraphics[width=0.46\textwidth]{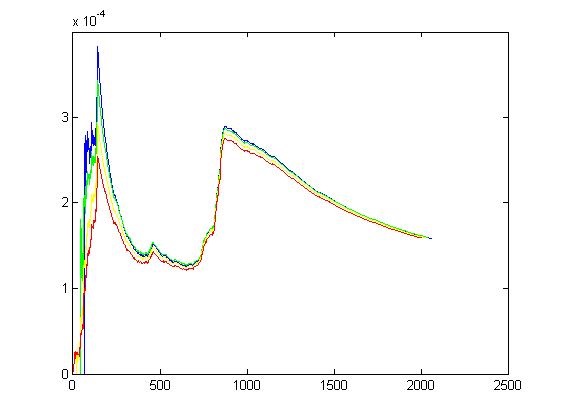}   
\label{2picchi}}
\end{figure}

In order to get a clearer result, we slightly tweak the procedure. When calculating $\widehat{k}(\tilde{N}, M)$ we ignore the last $20$ elements of the sum. In order words, instead of calculating $\widehat{k}(\tilde{N}, M)$ as the argmax for $\tilde{N}-M \leq k \leq \tilde{N}$, we drop the last $20$ elements of the series. This way we can locate only shocks that are at least $20$ days old, but we remove the large fluctuations that can be present immediately after a shock, which could cause some instability in the algorithm.

To locate all the past shocks in a given time series, we calculate the quantity $\widehat{i} (\tilde{N}, M)$ for $\tilde{N} = N, \dots, M$, and introduce the following sequence.

\begin{definition}
Given the quantities in definition \ref{def_estimate}, we introduce the shock time sequence as
\[
\widehat{h}((x_{i})_{250 \leq i \leq N}, M) := \left(  \widehat{i}(\tilde{N}, M)  \right)_{ M \leq \tilde{N} \leq N }
\]
\end{definition}

Finally, to get a clear picture of where the relevant shocks are in the time series, we can plot the number of occurrences of each element of the sequence $\widehat{h}((x_{i})_{250 \leq i \leq N}, M)$. We may choose to consider a date to be a shock-date if its number of occurrences exceeds a certain threshold. Table \ref{dates} contains our estimated shock-dates.
In Figure \ref{tau_sep} one can see the graphical evidence that maxima are concentrated on a small set of days for both FTSE and DJIA, supporting the validity of the method. The choice of the threshold is not completely determined, and we have based it on two criteria. Firstly, the number of estimated shocks should be consistent with the number of expected jumps of the Poisson process (whose rate is calibrated in section \ref{sectcorr}). Secondly, we see that in both series every date has a number of occurrences which is small or very large.
More explicitly, for the DJIA there are just 3 dates found approximately 50 times, whereas all the others are found more than 80 times or less than 25. Analogously, for the FTSE there are just 2 dates found approximately 50 times, whereas all the others are found more than 80 times or less than 20. Therefore it is reasonable to consider true shocks the ones with more than 80 occurrences, whereas it is not that clear how to consider the dates with approximately 50 occurrences. In any case, these choices are consistent with the number of expected jumps of the Poisson process.
Another issue in the choice of shock-dates is the fact that sometimes two or more very close dates are found a considerable number of times. In this case we consider them as related to the same shock. These dates are marked with the word "sparse" in table \ref{dates}, where we have reported our estimated dates.

\begin{figure}[!ht]
\centering
\caption{Shock times; x-axis: increasing time index; y-axis: y(i)=number of times the maximum of $\hat{V}$ is realized at i}
\subfigure[FTSE shock times]{
\includegraphics[width=0.46\textwidth]{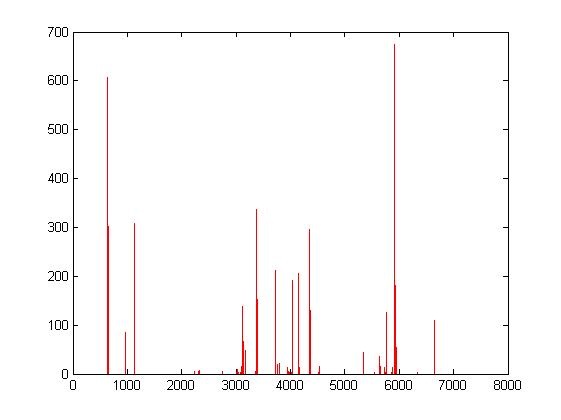}}
\hspace{1mm}
\subfigure[DJIA shock times]{
\includegraphics[width=0.46\textwidth]{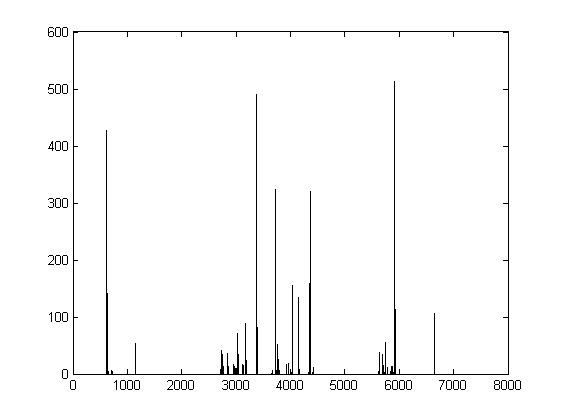}}
\label{tau_sep}
\end{figure}

\br{rs} It is natural at this point to wonder if there is a relation between the shocks in the two indices, and a straightforward experiment is to try to superimpose the two graphics (see Figure \ref{tau_sovr}). What we get is a clear indication that the shock times of the two series are almost coincident, only the magnitude (or evidence) being different and having very few shocks which are present only in one of the two indices. This is an important hint in the choice of the volatility for the bivariate process, a problem tackled in the following section.
\er

\begin{figure}[htbp]
\caption{Common jumps: overlap of Figures \ref{tau_sep} (a) and (b)}\begin{center}
\includegraphics[width=11.cm]{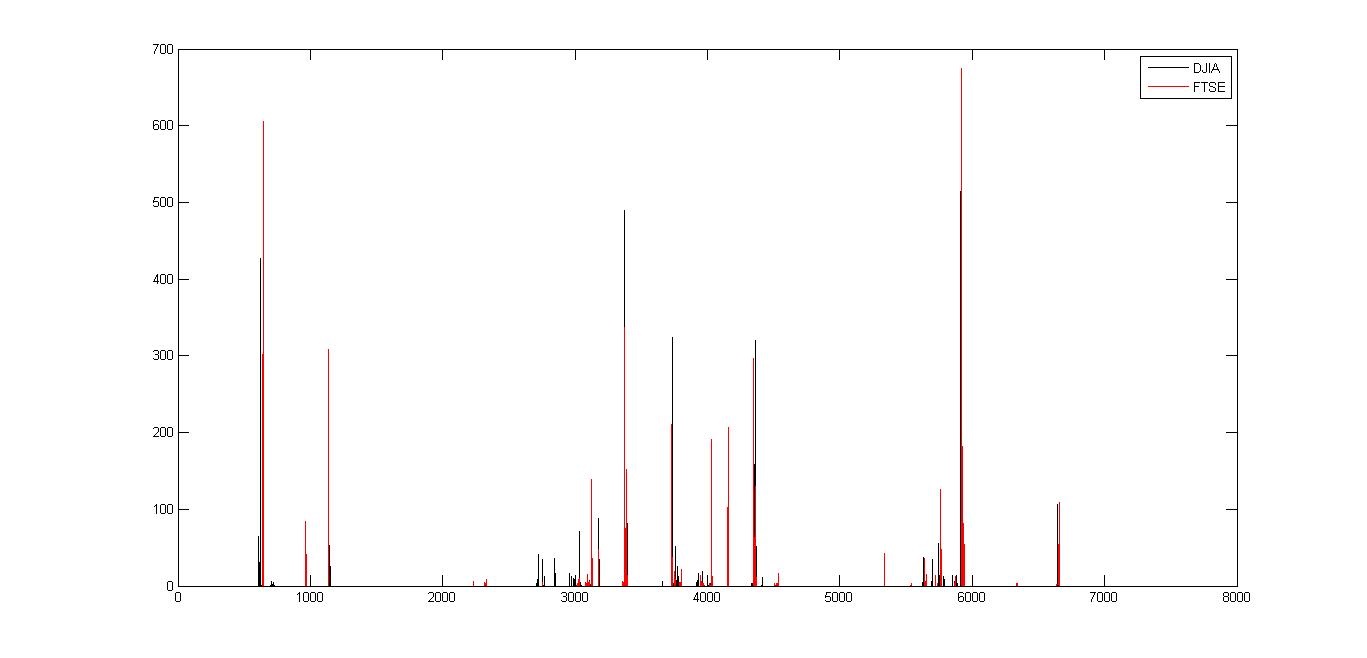}
\end{center}
\label{tau_sovr}
\end{figure}

\begin{table}
\caption{Estimated dates of shock times}
\begin{center}
\begin{tabular}{ll} \hline
{\em FTSE} & {\em DJIA} \\ \hline
14/10/87            					&   15/09/87  \\
24/01/89            					&              \\
26/09/89            					&   11/10/89  (questionable, 53)\\
                    					&   09/01/96  \\
                    					&   01/07/96  (questionable, 54)\\
                    					&   13/03/97  (sparse)\\
08/08/97            					&              \\
22/10/97 (questionable, 48)      &   16/10/97  \\
04/08/98            					&   31/07/98  \\
30/12/99            					&   04/01/00  \\
09/03/01            					&   09/03/01  \\
06/09/01            					&   06/09/01  \\
12/06/02            					&   05/07/02  \\
12/05/07 (questionable, 43)   &              \\
24/07/07            					&   24/07/07 (questionable, 57) \\
15/01/08            					&   04/01/08 (sparse) \\
03/09/08            					&   15/09/08  \\ 
05/08/11            					&   05/08/11  \\ \hline
\end{tabular}
\end{center}
\label{dates}
\end{table}

\section{The bivariate model}
\label{sectcorr}

\subsection{Definition of the bivariate model}
We investigate here the decay of the correlation between absolute returns of a bivariate version $(X,Y)=(X_t,Y_t)_{t\geq 0}$ of the model defined in Section \ref{sectdef}. We need the following quantities:  
\begin{itemize}
\item two Brownian motions $W^X=\left(W^X_t\right)_{t\geq 0}$ and $W^Y=\left(W^Y_t\right)_{t\geq 0}$;
\item two Poisson point processes on $\mathbb{R}$: $\mathcal{T}^X=(\tau^X_n)_{n\in\mathbb{Z}}$ and $\mathcal{T}^Y=(\tau^Y_n)_{n\in\mathbb{Z}}$, of rates respectively $\lambda^X$ and $\lambda^Y$;
\item positive constants $D^X$, $D^Y$, $\bar{\s}^X$ and $\bar{\s}^Y$.
\end{itemize}
The tricky point is the definition of the Poisson processes, that we want dependent but different. We introduce $\mathcal{T}^i,\,i=1,2,3$ independent Poisson processes with intensities $\lambda_i,\,i=1,2,3$. Then we define $\mathcal{T}^X=\mathcal{T}^1\cup \mathcal{T}^2$, $\mathcal{T}^Y=\mathcal{T}^1\cup\mathcal{T}^3$.
These are again Poisson processes, with intensity $\lambda_1+\lambda_2$ and $\lambda_1+\lambda_3$, and they are mutually dependent if $\mathcal{T}^1$ is non-degenerate.

We want to have a correlation coefficient $\rho\in [-1,1]$ also between the Brownian motions, so we introduce two independent Brownian motions $W^X,\,\tilde{W}$, and define 
\begin{align*}
W^Y_t=  \rho W^X_t + \sqrt{1-\rho^2} \tilde{W}_t.
\end{align*}
The correlation between $W^Y$ and $W^X$ will play no role in this paper, but the parameter $\rho$ is important for the correlation of the increments of $X$ and $Y$ at the same time, which could be an interesting aspect to consider. 
 
We suppose that the two-dim Brownian $W=(W^X,W^Y)$ and the two-dim time change $\mathcal{T}=(\mathcal{T}^X,\mathcal{T}^Y)$ are independent. The requirements of section \ref{sectdef} on the marginal one-dim processes are satisfied and we can define $X$ and $Y$ as
$$
X_t=W^X_{I^X_t},\quad Y_t=W^Y_{I^Y_t}
$$
where the random time changes $I^X_t$ and $I^Y_t$ are defined as in (\ref{timechange}). This definition is motivated by the fact that in empirical data the occurrence of a shock in one of the two indices often coincides with a peak in the volatility of the other one, as we noticed in remark \ref{rs}. So it is reasonable to suppose that part of the shock process is "common".

\subsection{Covariance and correlation of absolute log-returns}

For a given time $h$, we set
$$
\xi_t=|X_{t+h}-X_t|,\quad \eta_t=|Y_{t+h}-Y_t|,
$$
the absolute values of the returns of $X$ and $Y$ at time $t$. The following result on the asymptotic behavior of the covariance between $\xi_s$ and $\eta_t$ as the time lag $h$ goes to $0$ is proved in Section \ref{chaptproofs}.

\begin{theorem}[Covariance of absolute log-returns]
\label{thm:covar}
Let the process $(X,Y)$ be defined as above. Then, for any $t> s > 0$, the following holds: 
\[
\lim_{h\downarrow 0} \frac{Cov(\xi_s,\eta_t)}{h}=\frac{4\,\bar{\s}^X\bar{\s}^Y\,\sqrt{D^X D^Y}}{\pi}
Cov\left((-\tau^X_0)^{D^X-1/2},({t-s}-\tau^Y_0)^{D^Y-1/2}\right)e^{-\lambda^Y {(t-s)}}
\]
\end{theorem}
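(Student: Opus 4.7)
The plan is to condition on the Poisson processes $\mathcal{T}=(\mathcal{T}^X,\mathcal{T}^Y)$, use conditional Gaussianity together with near-disjointness of the Brownian time-windows to reduce the problem to a covariance of time-change increments, Taylor-expand these to leading order in $h$, and finally recast the covariance at time $u=t-s$ as one at time $0$ via the memoryless property of $\mathcal{T}^Y$.

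By joint stationarity of the increments of $(X,Y)$, $Cov(\xi_s,\eta_t)=Cov(\xi_0,\eta_u)$, so I may fix $s=0$ and $u>0$. Conditional on $\mathcal{T}$ the time changes are deterministic and the pair $(W^X_{I^X_h},W^Y_{I^Y_{u+h}}-W^Y_{I^Y_u})$ is centered Gaussian; using $W^Y=\rho W^X+\sqrt{1-\rho^2}\widetilde W$, its cross-covariance equals $\rho\bigl[\min(I^X_h,I^Y_{u+h})-\min(I^X_h,I^Y_u)\bigr]$. This vanishes as soon as $I^X_h\le I^Y_u$, which is the case for all sufficiently small $h$ on a set of full probability (since $I^X_h\to 0$ and $I^Y_u>0$ a.s.). On this set the two conditional increments are independent, and the identity $\E|N|=\sqrt{2/\pi}$ for a standard normal $N$ combined with the tower property yields
\[
Cov(\xi_0,\eta_u)=\tfrac{2}{\pi}\,Cov\bigl(\sqrt{I^X_h},\sqrt{I^Y_{u+h}-I^Y_u}\bigr).
\]

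On the event $G_h:=\{i^X(h)=0,\ i^Y(u+h)=i^Y(u)\}$, whose complement has probability $O(h)$, one has the explicit expressions
\[
I^X_h=(\sigma^X)^2\bigl[(h+S^X)^{2D^X}-(S^X)^{2D^X}\bigr],\quad I^Y_{u+h}-I^Y_u=(\sigma^Y)^2\bigl[(h+A^Y_u)^{2D^Y}-(A^Y_u)^{2D^Y}\bigr],
\]
with $S^X:=-\tau^X_0$ and $A^Y_u:=u-\tau^Y_{i^Y(u)}$ the age of $\mathcal{T}^Y$ at time $u$. A first-order Taylor expansion gives
\[
\frac{\sqrt{I^X_h}}{\sqrt{h}}\to \sigma^X\sqrt{2D^X}\,(S^X)^{D^X-1/2},\qquad \frac{\sqrt{I^Y_{u+h}-I^Y_u}}{\sqrt{h}}\to \sigma^Y\sqrt{2D^Y}\,(A^Y_u)^{D^Y-1/2}
\]
almost surely on $G_h$. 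Concavity of $t\mapsto t^{2D}$ dominates each quotient by its limit on $G_h$, and these dominants are integrable since $S^X\sim\mathrm{Exp}(\lambda^X)$ and $D^X-1/2>-1$ (similarly for $Y$), so dominated convergence passes the limit inside the covariance on $G_h$. On $G_h^c$ one combines the $O(h)$ probability with a moment bound on $I^X_h\,(I^Y_{u+h}-I^Y_u)$ to show the contribution is negligible compared to $h$.

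Finally, decompose on the event $E_0:=\{\mathcal{T}^Y\cap(0,u]=\emptyset\}$, of probability $e^{-\lambda^Y u}$. Since $\mathcal{T}^Y=\mathcal{T}^1\cup\mathcal{T}^3$, both $E_0$ and $A^Y_u\mathbf{1}_{E_0^c}$ depend only on $(\mathcal{T}^1\cup\mathcal{T}^3)\cap(0,u]$, hence are independent of $(\tau^X_0,\tau^Y_0)$ (which depend only on the Poisson processes in $(-\infty,0]$). On $E_0$ we have $A^Y_u=u+S^Y$ with $S^Y:=-\tau^Y_0$, and a short computation using these independences yields
\[
Cov\bigl((S^X)^{D^X-1/2},(A^Y_u)^{D^Y-1/2}\bigr)=e^{-\lambda^Y u}\,Cov\bigl((S^X)^{D^X-1/2},(u+S^Y)^{D^Y-1/2}\bigr).
\]
Assembling the three steps with $u=t-s$ produces exactly the claimed formula. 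I expect the main technical hurdle to be the control of the bad event $G_h^c$: the $t^{2D}$ singularity of the time change at each shock keeps $\sqrt{I^X_h/h}$ from being uniformly bounded when additional shocks fall into the tiny window $(0,h]$, and combining the $O(h)$ probability with higher-order moment estimates requires some care to extract a contribution that vanishes after division by $h$.
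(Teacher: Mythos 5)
Your proposal is correct and follows essentially the same route as the paper's proof: reduce $Cov(\xi_0,\eta_u)$ to $\tfrac{2}{\pi}Cov\bigl(\sqrt{I^X_h},\sqrt{I^Y_{u+h}-I^Y_u}\bigr)$ by conditional Gaussianity, differentiate the time changes almost surely, pass the limit inside the covariance via an integrability argument, and extract the factor $e^{-\lambda^Y u}$ from the memoryless property of $\mathcal{T}^Y$. The only difference is presentational: where you split into the good event $G_h$ (dominated convergence) and its complement (an $O(h)$ probability times a moment bound you correctly flag as the technical hurdle), the paper handles both at once through a single uniform $L^\delta$ bound on $I_h/h$ for $\delta<\tfrac{1}{1-2D}$ (its Lemma~\ref{unif}), which is exactly the estimate your bad-event step would require.
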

\br{alalala}
\label{unionpoi}
Using the definition of $\mathcal{T}^X$ and $\mathcal{T}^Y$ and the properties of Poisson processes it is possible to rewrite this expression as 
\begin{align*}
\lim_{h\downarrow 0} \frac{Cov(\xi_s,\eta_{t})}{h}=&
\frac{4}{\pi} \bar{\s}^X\bar{\s}^Y\,\sqrt{D^X D^Y}\left(\lambda^X\right)^{1/2-D^X}\left(\lambda^Y\right)^{1/2-D^Y}\times\\
&Cov\left((S^X)^{D^X-1/2},(\lambda^Y (t-s) +S^Y)^{D^Y-1/2}\right)e^{-\lambda^Y (t-s)}.
\end{align*}
Here $S^X$ and $S^Y$ are correlated exponential variables of parameter 1, defined as follows.
We set
\begin{align*}
& (\lambda_1+\lambda_2)S^{1,X}=(\lambda_1+\lambda_3)S^{1,Y} \sim 
\exp\left(\lambda_1\right), \\
& S^2\sim \exp\left(\frac{\lambda_2}{\lambda_1+\lambda_2}\right),
\quad S^3\sim \exp\left(\frac{\lambda_3}{\lambda_1+\lambda_3}\right),
\end{align*}
with $S^{1,X}, S^2, S^3$ mutually independent. We define $S^X:=\min\{S^{1,X},S^2\}$ and $S^Y:=\min\{S^{1,Y},S^3\}$.
\er
\br{bjejej}
If instead of absolute returns we take simple returns, we find that $\lim_{h\downarrow 0} Cov(X_{s+h}-X_s,Y_{t+h}-Y_t)=0$, for any $s\neq t$. Our model is consistent with the fact that empirical cross-correlations of returns are not significant even for very small time lags, as it is for autocorrelations.
\er

From this theorem we obtain an asymptotic evaluation for correlations between absolute log-returns, when the time scale goes to $0$.
Recall that the correlation coefficient between $\xi_s$ and $\eta_t$ is defined as
$$
\rho(\xi_s,\eta_t)=\rho(|X_{s+h}-X_s|,|Y_{t+h}-Y_t|)=\frac{Cov(\xi_s,\eta_t)}{\sqrt{Var(\xi_s)Var(\eta_t)}}.
$$
\begin{corollary}[Decay of cross-asset correlations]
\label{corcorr}
For the process $(X,Y)$ defined above, for any $t> s > 0$, the following expression holds as $h\downarrow 0$: 
\begin{align*}
&\lim_{h\downarrow 0} \rho(\xi_s,\eta_t)=\frac{2}{\pi} \frac{Cov\left((S^X)^{D^X-1/2},(\lambda^Y (t-s)+S^Y)^{D^Y-1/2}\right)}
{\sqrt{Var(|N| S^{D^X-1/2})Var(|N| S^{D^Y-1/2})}}e^{-\lambda^Y (t-s)}
\end{align*}
where with $S$ we denote an exponential variable of parameter $1$ and with $N$ a standard normal variable, and they are mutually independent. $S^X$ and $S^Y$ are defined in Remark \ref{unionpoi}.
\end{corollary}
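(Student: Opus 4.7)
The plan is to view the corollary as a ratio computation: write
\[
\rho(\xi_0,\eta_t)=\frac{Cov(\xi_0,\eta_t)}{\sqrt{Var(\xi_0)\,Var(\eta_t)}}
\]
and analyze numerator and denominator separately as $h\downarrow 0$, after multiplying and dividing by appropriate powers of $h$. The numerator $\lim_{h\downarrow 0} Cov(\xi_0,\eta_t)/h$ is given directly by Theorem \ref{thm:covar} in the reformulation provided by Remark \ref{unionpoi}, so essentially all the work reduces to evaluating the two asymptotic variances $\lim_{h\downarrow 0}Var(|X_h|)/h$ and $\lim_{h\downarrow 0}Var(|Y_h|)/h$ and checking that the algebraic simplification produces precisely the stated constant $2/\pi$.

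For the variance of $|X_h|$ I would condition on the Poisson process $\mathcal{T}^X$. Since $X_h\mid \mathcal{T}^X \sim \mathcal{N}(0,I^X_h)$ one gets $E(X_h^2)=E(I^X_h)$ and $E(|X_h|)=\sqrt{2/\pi}\,E(\sqrt{I^X_h})$, so
\[
Var(|X_h|)=E(I^X_h)-\tfrac{2}{\pi}\bigl(E(\sqrt{I^X_h})\bigr)^{2}.
\]
On the event $\{i^X(h)=0\}$, which has probability $1-O(h)$, the definition \eqref{timechange} gives $I^X_h=(\sigma^X)^{2}\bigl[(h-\tau^X_0)^{2D^X}-(-\tau^X_0)^{2D^X}\bigr]$, and a first-order Taylor expansion in $h$ yields $I^X_h\sim 2D^X(\sigma^X)^{2}(-\tau^X_0)^{2D^X-1}h$. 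Dividing by $h$ (resp.\ by $\sqrt{h}$) and passing to the limit, provided one justifies the interchange of limit and expectation, one obtains
\[
\lim_{h\downarrow 0}\frac{Var(|X_h|)}{h}=2D^X(\sigma^X)^{2}\,Var\!\left(|N|\,(-\tau_0^X)^{D^X-1/2}\right),
\]
and analogously for $|Y_h|$. Using $-\tau_0^X\stackrel{d}{=}S^X/\lambda^X$ (and the analogue for $Y$) to factor out the $\lambda$'s, the denominator becomes
\[
2\sqrt{D^X D^Y}\,\sigma^X\sigma^Y\,(\lambda^X)^{1/2-D^X}(\lambda^Y)^{1/2-D^Y}\sqrt{Var(|N|S^{D^X-1/2})Var(|N|S^{D^Y-1/2})},
\]
which cancels the prefactor $4/\pi$ coming from Remark \ref{unionpoi} down to exactly $2/\pi$, giving the claimed formula.

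The main technical point is justifying the interchange of limit and expectation in the expansions of $E(I^X_h)/h$ and $E(\sqrt{I^X_h})/\sqrt{h}$. The pointwise limits involve the random variable $(-\tau_0^X)^{2D^X-1}$, whose expectation is $(\lambda^X)^{1-2D^X}\Gamma(2D^X)$, finite since $D^X>0$; the analogous quantity for the half-power is a $\Gamma(D^X+1/2)$ factor. To convert pointwise to $L^1$ convergence I would split expectations along the partition $\{i^X(h)=0\}\cup\{i^X(h)\geq 1\}$, noting that the second event contributes $O(h)$ from the Poisson probability together with the moment bound on $I^X_h$ supplied by Proposition \ref{basicpr}; on the first event, a monotonicity-in-$h$ bound together with the Gamma-integrability above gives a dominating function. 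This is the step I expect to require the most care, while everything else is algebra that is already essentially present in the proof of Theorem \ref{theoautocorr} (indeed the diagonal case $X=Y$, $t=0$ recovers that theorem).
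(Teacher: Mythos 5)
Your proposal is correct and follows exactly the route the paper intends: the corollary is obtained from Theorem \ref{thm:covar} (in the form of Remark \ref{unionpoi}) by dividing by the asymptotic variances $\lim_{h\downarrow 0}Var(|X_h|)/h=2D^X(\sigma^X)^2(\lambda^X)^{1-2D^X}Var(|N|S^{D^X-1/2})$ and its analogue for $Y$, and your algebra (the $\tfrac{4}{\pi}$ versus $2$ cancellation giving $\tfrac{2}{\pi}$) checks out. The interchange of limit and expectation you flag as the delicate step is precisely what the paper's Lemma \ref{unif} supplies, since boundedness of $I^X_h/h$ in $L^\delta$ for some $\delta>1$ gives the required uniform integrability of both $I^X_h/h$ and $\sqrt{I^X_h/h}$.
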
 
\br{samejumps}
Suppose that $X$ and $Y$ are produced by the same time change of two different Brownian motions, i.e $I^X=I^Y=:I$, or:
$$
D^X=D^Y=D,\quad \mathcal{T}^X=\mathcal{T}^Y=\mathcal{T},\quad\bar{\s}^X=\bar{\s}^Y=\bar{\s}.
$$
The expression for the decay of cross-asset correlation in this case is 
$$
\lim_{h\downarrow 0} \rho(\xi_s,\eta_t)=\frac{2}{\pi} \frac{Cov\left(\bar{\s} S^{D-1/2},\bar{\s}(\lambda (t-s)+S)^{D-1/2}\right)e^{-\lambda (t-s)}}
{Var(\bar{\s} |N| S^{D-1/2})},
$$
which is exactly the expression for the decay of autocorrelation  (cf. Theorem \ref{theoautocorr}). This is very close to what we see on empirical  financial data.
\er

\subsection{Empirical results}\label{sectionempiric}
We consider the DJIA Index and FTSE Index, from April 2nd, 1984 to July 6th, 2013. For the data analysis we use the software MatLab \cite{MATLAB:2009}. What follows relies on the ergodicity of the increments $X$ (Proposition \ref{basicpr}-(3)). We start considering the two series separately. We choose some significant quantities related to stylized facts, and use them for the calibration: we consider the multiscaling coefficients $C_1$ and $C_2$, the multiscaling exponent $A(q)$, the volatility autocorrelation function $\rho(t)$. The procedure for the calibration is described precisely in \cite{acdp, tesi_paolo}. 
We find the following estimates for the parameters. 
\begin{align*}
  &\mbox{FTSE:}\quad
  \overline{D} \approx 0.16; \quad 
  \overline{\lambda} \approx 0.0019; \quad
  \overline{\bar{\s}} \approx 0.11. \\  
  &\mbox{DJIA:}\quad
  \overline{D} \approx 0.14; \quad 
  \overline{\lambda} \approx 0.0014; \quad
  \overline{\bar{\s}} \approx 0.127. 
\end{align*}
In Figure \ref{multiscaling} we show the empirical multiscaling exponent versus the prediction of our model with these parameters. Our estimate for the multiscaling exponent looks smoothed out by the empirical curve. 
Since a simulation of daily increments of the model yields a graph analogous to the empirical one, this slight inconsistency is likely due to the fact that the theoretical line shows the limit for $h\downarrow 0$, whereas the empirical data come from a daily sample. 

Figure \ref{autocorrelation} concerns volatility autocorrelation. The decay is between polynomial and exponential, and fits very well empirical data considering that they are quite widespread.

\begin{figure}[!ht]
\centering
\caption{Multiscaling exponent}
\subfigure[FTSE]{
\includegraphics[width=0.46\textwidth]{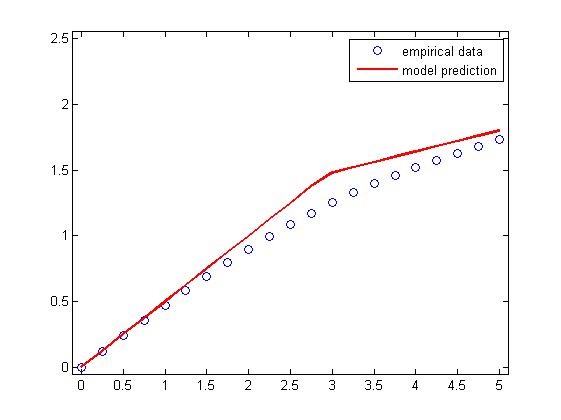}}
\hspace{1mm}
\subfigure[DJIA]{
\includegraphics[width=0.46\textwidth]{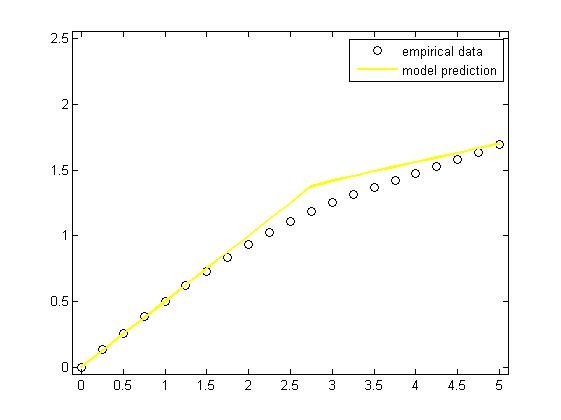}}
\label{multiscaling}
\end{figure}

\begin{figure}[!ht]
\centering
\caption{Volatility autocorrelation}
\subfigure[FTSE log plot]{
\includegraphics[width=0.46\textwidth]{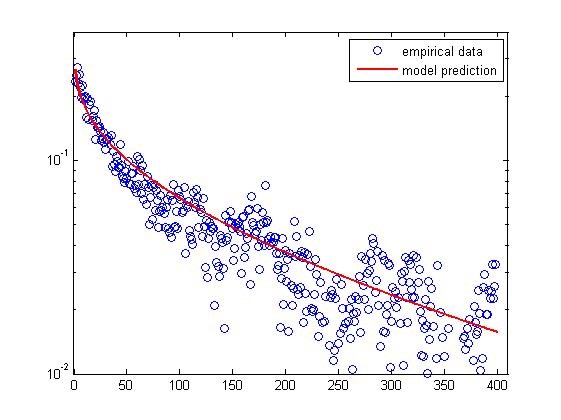}}
\hspace{1mm}
\subfigure[DJIA log plot]{
\includegraphics[width=0.46\textwidth]{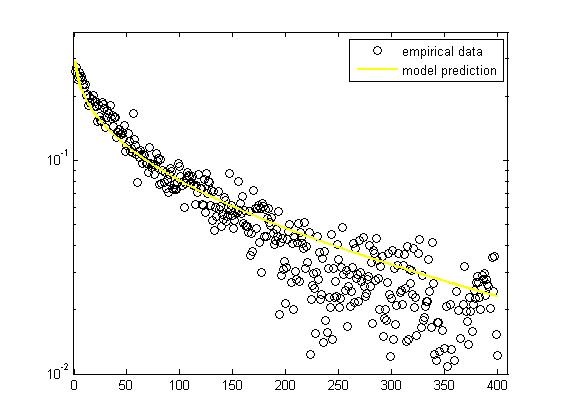}}\\
\subfigure[FTSE loglog plot]{
\includegraphics[width=0.46\textwidth]{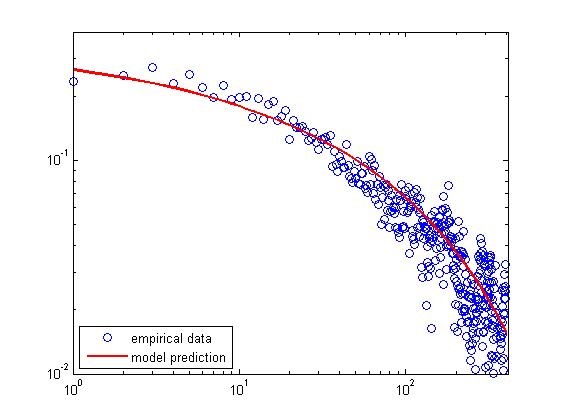}}
\hspace{1mm}
\subfigure[DJIA loglog plot]{
\includegraphics[width=0.46\textwidth]{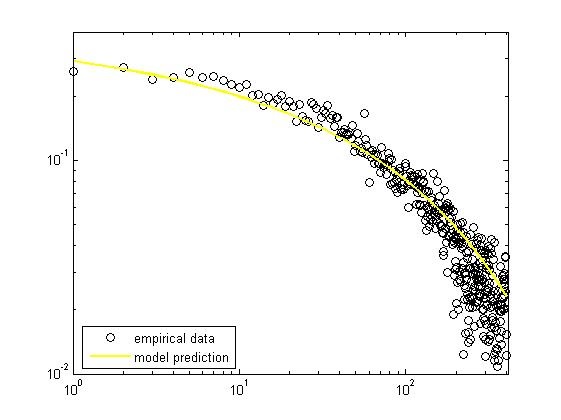}}
\label{autocorrelation}
\end{figure}

We display now the distribution of log returns for our model: $p_t(\cdot)=\PR(X_t\in\cdot)=\PR(X_{n+t}-X_n\in\cdot)$ for $t=1$ day, and the analogous empirical quantity. We do not have an explicit analytic expression for $p_t$, but we can easily obtain it numerically. Figure (\ref{distr}) represents the bulks and the integrated tails of the distributions. We see that the agreement is remarkable, given that this curves are a test \emph{a posteriori}, and no parameter has been estimated using these distributions!

\begin{figure}[!ht]
\centering
\caption{Distribution of log returns}
\subfigure[FTSE bulk]{
\includegraphics[width=0.46\textwidth]{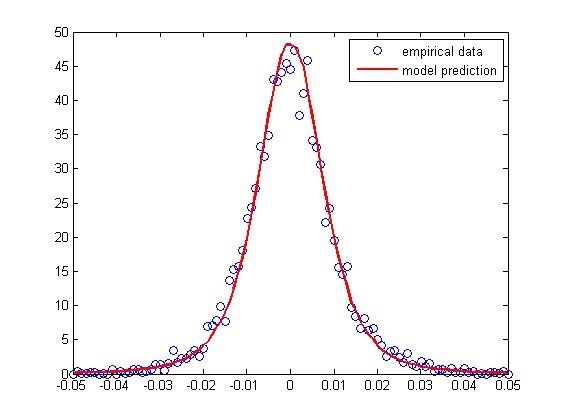}}
\hspace{1mm}
\subfigure[DJIA bulk]{
\includegraphics[width=0.46\textwidth]{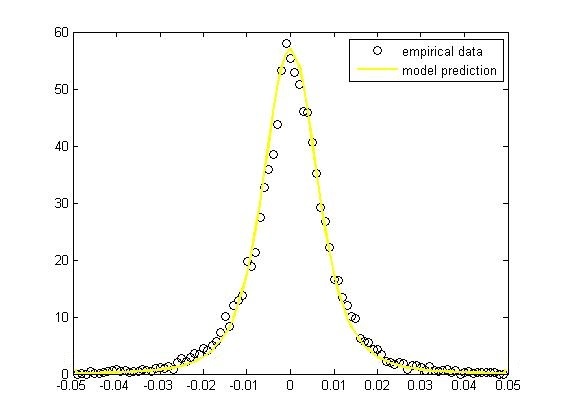}}\\
\subfigure[FTSE integrated tails]{
\includegraphics[width=0.46\textwidth]{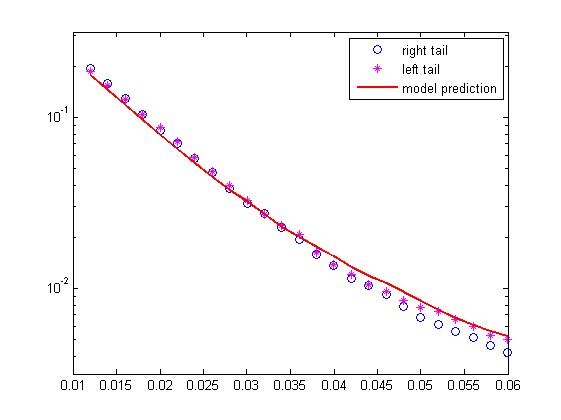}}
\hspace{1mm}
\subfigure[DJIA integrated tails]{
\includegraphics[width=0.46\textwidth]{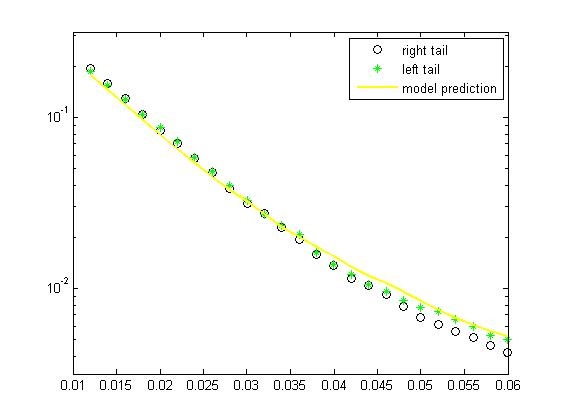}}
\label{distr}
\end{figure}

We have seen in remark \ref{rs} that our estimates for shocks in FTSE and DJIA are strictly related: the occurrence of a shock in an index comes very often together with the occurrence of a shock in the other one.
As a consequence, a first idea to try a rough modeling of cross asset correlations is to suppose $\mathcal{T}^f$, jump process for FTSE, and $\mathcal{T}^d$, jump process for DJIA, to be the same process. From Remark \ref{samejumps} and from the fact that $D$ and $\bar{\s}$ are very similar for FTSE and DJIA, we expect the decay of volatility autocorrelation in the DJIA, the decay of volatility autocorrelation in the FTSE and the decay of cross-asset correlation of absolute returns to display a similar behavior. This is exactly what happens if we plot these quantities (see Figure \ref{emp_corr}), in agreement with the empirical findings of \cite{Podobnik:2007}.
\begin{figure}[!ht]
\centering
\caption{Comparison of empirical correlations}
\subfigure[log plot; one point out of three is plotted]{
\includegraphics[width=0.46\textwidth]{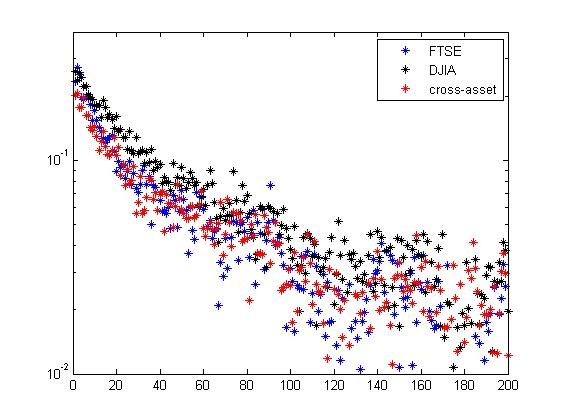}}
\hspace{1mm}
\subfigure[loglog plot; for $t\geq20$, one point out of three is plotted]{
\includegraphics[width=0.46\textwidth]{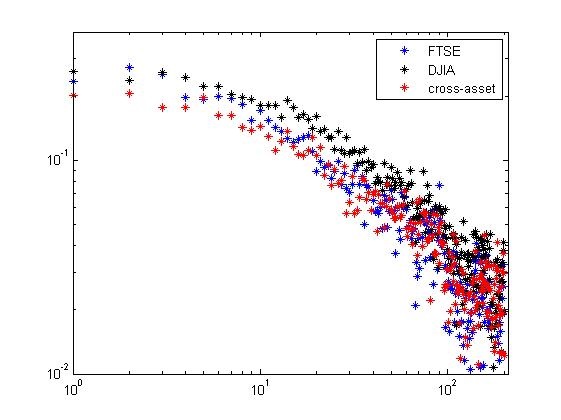}}
\label{emp_corr}
\end{figure}

Under this rough hypothesis, our estimate for cross-asset correlations coincides with our prediction for the decay of volatility autocorrelation in FTSE or DJIA, or with a mean between the two. 
We can do better using the bivariate jump process $I=(I^X,I^Y)$ described at the beginning of section \ref{sectcorr}. We need to estimate the intensities $\lambda_1,\lambda_2,\lambda_3$, subject to the constraints due to the estimated parameters of the one-dimensional models.
Define $\widehat{\gamma}_h(t)$ as the empirical correlation coefficient over h days: 
$$
\widehat{\gamma}_h(t)=corr(|x^f_{\cdot+h}-x^f_\cdot|,|x^d_{\cdot+t+h}-x^d_{\cdot+t}|).
$$
where $x^f$ and $x^d$ are the FTSE and DJIA series of detrended log-returns.
Minimizing a suitable $L^2$ distance between this quantity and the theoretical cross-correlation (Theorem \ref{theoautocorr}
) we obtain
$$
\lambda_1 = 0.0014;\quad \lambda_2 = 0.0005;\quad \lambda_3 = 0.
$$
In Figure \ref{crossassetcorr} we plot the prediction of our model versus the empirical decay of the cross-asset correlations, for $t=1,..,400$ days.

\begin{figure}[!ht]
\centering
\caption{FTSE and DJIA cross-asset correlations}
\subfigure[log plot]{
\includegraphics[width=0.46\textwidth]{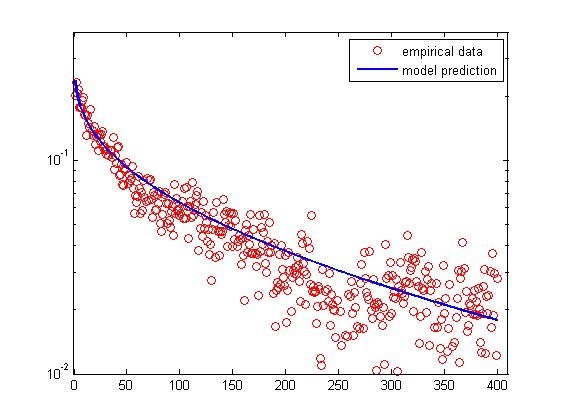}}
\hspace{1mm}
\subfigure[loglog plot]{
\includegraphics[width=0.46\textwidth]{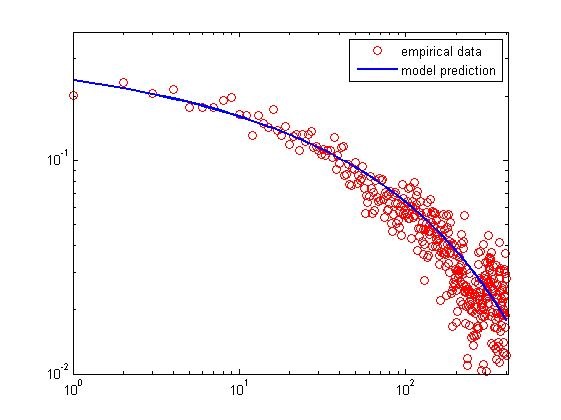}}
\label{crossassetcorr}
\end{figure}

The fact that our estimate is $\lambda_3 = 0$ means that our best fit with real data is obtained when the shocks for the FTSE are given by the shocks of the DJIA plus some additional ones, given by a sparser Poisson process. These estimates, due to the small sample size, are too rough to allow more quantitative considerations, but one may suppose that shocks in the DJIA index always determine a shock in the FTSE index, whereas there might be a shock in FTSE which does not imply a significant increment in the empirical variance of DJIA.

\section{Technical results}
\label{sectproofs}

\subsection{Geometric properties of quadratic variation}
Recall that we have defined the backward difference quotient
\[
Q_{T}(t) = \frac{I_{T} - I_{T- t}}{t}, \quad \mbox{ and }\quad I_t =\langle X \rangle_t.
\]
Considered pathwise, $Q_T(\cdot)$ has some nice geometrical properties which guarantee the existence of an ``isolated'' maximum at the last jump time before $T$. In order to simplify the notation we set 
\be{ma}
m := T - \tau_{i(T)}, \quad \quad 
\alpha := \tau_{i(T)} - \tau_{i(T)-1}, \quad \quad 
K=\big(\frac{1}{2D}\big)^{\frac{1}{1-2D}}.
\ee
The following lemma shows that if $m$ is small enough $Q_T(\cdot)$ attains its maximum at $m$. Moreover, reducing $m$, the value of $Q_T(\cdot)$ at $m$ is arbitrarily larger than the value at the successive mimimum. 
\begin{lemma}
  Let  $m,\,\alpha,K$ as in \eqref{ma}. If $m < K\alpha$, then 
  \begin{enumerate}
    \item $m$ is a local maximum for $Q_T(\cdot)$
    \item 
    $Q_T(\cdot)$ attains its mimimum on $ (m, m+\a )$ at a point $\gamma$.  Moreover,  $Q_T(\cdot)$ is increasing on $(0,m)$, decreasing on $( m, \gamma )$.
    \item The following limit holds 
      \[
	Q_T(m) -  Q_T(\gamma)\xrightarrow{m \to 0+} + \infty  
	   \]
  \end{enumerate}
  \label{lemma:geom}
\end{lemma}

\begin{proof}
    \noindent (1) $Q_T(t)$ is everywhere continuous and it is differentiable but at $\{T - \tau_n\}_{n \in \N}$. To prove that it attains the
      maximum at $m=T - \tau_{i(T)}$ we prove that at $m$ the left
      derivative is greater than $0$ and the right one is less than $0$. 
The derivatives are
\[	
Q'_{T}(t) = \frac{\bar{\s}^{2}2D(T-t-\tau_{i(T)-1})^{2D - 1}t - (I_{T} - I_{T -t}) }{t^2}, \quad\quad t \in ( m, m+\a )
\]
and
\begin{equation}
	\label{eq:derf2}
	Q'_{T}(t) = \frac{\bar{\s}^{2}2D(T-t-\tau_{i(T)})^{2D - 1}t - (I_{T} - I_{T -t}) }{t^2}, \quad\quad t \in ( 0, m )
\end{equation}
      Since $I_s$ is piece-wise concave,
      \[
	I_T - I_{T-t} < I'(T-t)t
      \]
      From \eqref{eq:derf2} we get $  Q'_{T}(t) > 0 $ in $(  0,m )$.  On the other hand 
      \begin{equation}
	\label{eq:limder}
	\lim_{t \to m}\frac{ \bar{\s}^{2}2D(T-t-\tau_{i(T)-1})^{2D - 1}t - (I_{T} - I_{T -t}) }{t^2} = \frac{\bar{\s}^{2}}{m^{2}}\left(2D\alpha^{2D-1}m - m^{2D} \right)=: L_{\bar{\s}}(\alpha,m) 
      \end{equation}                 

      $L_{\bar{\s}}(\alpha,m)$  has the following properties
      \[
\begin{split}
	L_{\bar{\s}}(\alpha,m) = 0 \Leftrightarrow & m = \alpha \left(\frac{1}{2D}\right)^{\frac{1}{1-2D}}\\
	\lim_{m \to 0^{+}} L_{\bar{\s}}(\alpha, m) &= -\infty \\
	\lim_{m \to +\infty} L_{\bar{\s}}(\alpha, m)& = +\infty
\end{split}
\]
      which imply that the right derivative is less than zero if and only if $m <
      K \alpha$, thus $Q_{T}(t)$ attains a local maximum at $m$ if and only if $m < K \alpha$.

\noindent (2)  Note that $Q_T \in \mathcal{C}^{\infty}((m, \a+m ))$ a.s..
      The second order derivative on this interval is
      \[
	Q^{(2)}_T(t) = \frac{\bar{\s}^2 2D(2D-1)(T -t - \tau_{i(T)-1})^{2D - 2}}{t} - \frac{2Q'_T(t)}{t}  
	\label{eq:dersec}
      \]
      Thus $Q'_T(t) = 0$ implies $ Q^{(2)}_T(t) > 0$, then all stationary points are
      minima. Moreover $Q_T$ can have only one minimum which in
      fact exists, since from \eqref{eq:limder} we get
      \[
	\lim_{t \to T - \tau_{i(T)}} Q'_T(t) < 0
      \]
      and
      \[
	\lim_{t \to T - \tau_{i(T)-1}} Q'_T(t) = +\infty 
      \]
      Let $\gamma \in (m, \a+m )$ be the point at  which $Q_T(t)$ attains its minimum. Clearly $Q_T$ is decreasing on $(m,\g)$, and we have already proved that $Q_T$ is increasing on $(0,m)$.

\noindent (3)
      By definition
      \[
	Q_{T}(m) - Q_{T}(\gamma) > Q_{T}(m) - Q_{T}(\alpha + m)
      \]   
      Let $\xi = \tau_{i(T)-1} -  \tau_{i(T)-2}$. We get 
      \[
	Q_{T}(m) - Q_{T}(\alpha + m) = \frac{I_{T} - I_{\tau_{i(T)}}}{T - \tau_{i(T)}} -  \frac{I_{T} - I_{\tau_{i(T)-1}}}{T - \tau_{i(T)-1}} = \bar{\s}^{2}\left(\frac{m^{2D} + \alpha^{2D}}{m} - \frac{m^{2D} + \alpha^{2D} + \xi^{2D}}{m + \alpha}\right)
      \]  
      Passing to the limit
      \[
	Q_{T}(m) - Q_{T}(\alpha + m) = \frac{\bar{\s}^{2}(\alpha^{2D+1} + m^{2D}\alpha - m\xi^{2D})}{m(m + \alpha)} \xrightarrow{m \to 0+} + \infty
      \]   
      then $\lim_{m \to 0+}  Q_{T}(m) - Q_{T}(\gamma) = +\infty$.
\end{proof}

\subsection{Proof of theorem \ref{thm:covar} on  covariance between absolute log-returns}
\label{chaptproofs}
Recall that the increments of $W^X$ and $W^Y$ are independent on disjoint time intervals, and $W^X$ and $\tilde{W}$ are independent Brownian Motions. So for $h<t-s$
\begin{align*}
Cov(\xi_s,\eta_t)&=
\E(|X_{s+h}-X_s| |Y_{t+h}-Y_t|)-\E|X_{s+h}-X_s| \E|Y_{t+h}-Y_t|\\
&= \E\left(|W^X_1| \sqrt{I^X_{s+h}-I^X_s} |\tilde{W}_1| \sqrt{I^Y_{t+h}-I^Y_t}\right)\\
&- \E\left(|W^X_1| \sqrt{I^X_{s+h}-I^X_s}\right) \E \left(|\tilde{W}_1|\sqrt{I^Y_{t+h}-I^Y_t}\right)
\end{align*}
and using independence
\begin{align*}
Cov(\xi_s,\eta_t)&=
(\E|W^X_1|)^2 Cov\left(\sqrt{I^X_{s+h}-I^X_s},\sqrt{I^Y_{t+h}-I^Y_t}\right)\\
&=\frac{2}{\pi} Cov\left(\sqrt{I^X_{s+h}-I^X_s},\sqrt{I^Y_{t+h}-I^Y_t}\right).
\end{align*}
From our choice of $\mathcal{T}^X$ and $\mathcal{T}^Y$ we have the stationarity of the increments of $(I^X,I^Y)$, therefore 
$$
Cov\left(\sqrt{I^X_{s+h}-I^X_s},\sqrt{I^Y_{t+h}-I^Y_t}\right)=
Cov\left(\sqrt{I^X_{h}},\sqrt{I^Y_{t-s+h}-I^Y_{t-s}}\right).
$$
Recall
$$
I_h=\bar{\s}^2\left[(h-\tau_{i(h)})^{2D}+\sum^{i(h)}_{k=1}(\tau_k-\tau_{k-1})^{2D}
-(-\tau_0)^{2D}\right]
$$
Almost surely, for $h$ small enough, $i(h)=i(0)=0$, so the sum in the right hand vanishes and a.s.
\begin{align*}
\lim_{h\downarrow 0}\frac{I_h}{h}&=
\lim_{h\downarrow 0}\bar{\s}^2\frac{(h-\tau_{i(h)})^{2D}-(-\tau_0)^{2D}}{h}\\
&=\bar{\s}^2\lim_{h\downarrow 0}\frac{(h-\tau_0)^{2D}- (-\tau_0)^{2D}}{h}=2D\bar{\s}^2(-\tau_0)^{2D-1},
\end{align*}
and analogously 
\begin{align*}
\lim_{h\downarrow 0}\frac{I_{t+h}-I_t}{h}=2D\bar{\s}^2(t-\tau_{i(t)})^{2D-1}.
\end{align*}
Next Lemma \ref{unif} implies the uniform integrability of the families 
$$
\left\{ \frac{I^X_h}{h}:h\in(0,1] \right\}, \quad
\left\{ \frac{I^Y_{t+h}-I^Y_t}{h}:h\in(0,1] \right\},
$$
therefore we first apply bi-linearity of covariance and then take the limit inside, obtaining
\begin{equation*}
\begin{split}
&\lim_{h\downarrow 0}\frac{Cov\left(\sqrt{I^X_h},\sqrt{I^Y_{t+h}-I^Y_t}\right)}{h}=Cov\left(\lim_{h\downarrow 0}\sqrt{\frac{I^X_h}{h}},\lim_{h\downarrow 0}\sqrt{\frac{I^Y_{t+h}-I^Y_t}{h}}\right)\\
&\quad=2\sqrt{D^X D^Y} \bar{\s}^X\bar{\s}^Y Cov\left((-\tau^X_0)^{D^X-1/2},(t-\tau^Y_{i^Y(t)})^{D^Y-1/2}\right).
\end{split}
\end{equation*}
We can obtain a better representation of this quantity multiplying the right term in the covariance by the characteristic function of $\{i^Y(t)=0\}$ plus the characteristic function of its complement:
\begin{align*}
Cov\left((-\tau^X_0)^{D^X-1/2},(t-\tau^Y_{i^Y(t)})^{D^Y-1/2}\right)
&= Cov\left((-\tau^X_0)^{D^X-1/2},(t-\tau^Y_{i^Y(t)})^{D^Y-1/2}\textbf{1}_{\{i^Y(t)=0\}}\right)\\
&+ Cov\left((-\tau^X_0)^{D^X-1/2},(t-\tau^Y_{i^Y(t)})^{D^Y-1/2}\textbf{1}_{\{i^Y(t)> 0\}}\right)
\end{align*}
The second summand is $0$ because 
$(t-\tau^Y_{i^Y(t)})^{D^Y-1/2}\textbf{1}_{\{i^Y(t)> 0\}}$ is $\mathcal{G}^Y_{>0}$
measurable, where $\mathcal{G}^Y_{>0}=\bar{\s}(\tau^Y_k:\, k>0)$, and $\mathcal{G}^Y_{>0}$ is independent of $\tau_0$. So, using the fact that $\textbf{1}_{\{i^Y(t)=0\}}$ is $\mathcal{G}^Y_{>0}$ measurable, because so is $\textbf{1}_{\{i^Y(t)>0\}}$, we have
\begin {align*}
Cov\left((-\tau^X_0)^{D^X-1/2},(t-\tau^Y_{i^Y(t)})^{D^Y-1/2}\right)
&=Cov\left((-\tau^X_0)^{D^X-1/2},(t-\tau^Y_0)^{D^Y-1/2}\textbf{1}_{\{i^Y(t)=0\}}\right)\\
&=Cov\left((-\tau^X_0)^{D^X-1/2},(t-\tau^Y_0)^{D^Y-1/2}\right)\E\left(\textbf{1}_{\{i^Y(t)=0\}}\right)\\
&=Cov\left((-\tau^X_0)^{D^X-1/2},(t-\tau^Y_0)^{D^Y-1/2}\right)e^{-\lambda^Y t},
\end{align*}
and the theorem is proved. \qed

The following result is used in the proof of Theorem \ref{thm:covar}. Recall that $0<D<1/2$.
\begin{lemma}\label{unif}
The class of random variables 
$$
\left\{\frac{I^X_h}{h}:h\in(0,1]\right\}
$$ 
is bounded in $L^\delta$ for $\delta<\frac{1}{1-2D}$.
\end{lemma}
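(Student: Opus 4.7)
The plan is to decompose $\E[(I^X_h/h)^\delta]$ according to the number $i^X(h)$ of points of $\mathcal{T}^X$ in $(0,h]$, treating the typical event $\{i^X(h)=0\}$ and the rare event $\{i^X(h)\geq 1\}$ separately. Writing $U := -\tau^X_0 \sim \exp(\lambda^X)$, on $\{i^X(h)=0\}$ the sum in the definition of $I^X_h$ is empty and one has the clean identity $I^X_h = (\sigma^X)^2[(h+U)^{2D} - U^{2D}]$.

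The core of the argument is to exploit two elementary bounds coming from the concavity of $x \mapsto x^{2D}$ on $[0,\infty)$ (note $2D \in (0,1]$): the tangent bound $(h+U)^{2D} - U^{2D} \leq 2D\, U^{2D-1}\, h$ and the subadditivity bound $(h+U)^{2D} - U^{2D} \leq h^{2D}$. These cross when $U$ is of order $h$, so I would split $\{i^X(h) = 0\}$ further into $\{U \geq c h\}$ and $\{U < ch\}$ with $c := (2D)^{1/(1-2D)}$. On the first event one gets $I^X_h/h \leq 2D (\sigma^X)^2 U^{2D-1}$, and a direct gamma-integral computation shows $\E[U^{(2D-1)\delta}]$ is finite iff $(2D-1)\delta > -1$, i.e.\ iff $\delta < 1/(1-2D)$. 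On the second event one has the uniform-in-$U$ bound $I^X_h/h \leq (\sigma^X)^2 h^{2D-1}$, divergent as $h\downarrow 0$, but coupling it with $\PR(U < ch) \leq \lambda^X c\, h$ gives $\E\bigl[(I^X_h/h)^\delta\, \textbf{1}_{U<ch}\bigr] \leq C\, h^{(2D-1)\delta + 1}$, bounded on $(0,1]$ throughout the same range $\delta \leq 1/(1-2D)$.

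For the remaining event $\{i^X(h) \geq 1\}$, a direct estimate using subadditivity of $x \mapsto x^{2D}$ and the fact that every relevant gap length is bounded by $h + U$ yields, after cancellation against the $-U^{2D}$ term, the crude but sufficient bound $I^X_h \leq (\sigma^X)^2 (i^X(h)+1) h^{2D}$. Applying H\"older's inequality with exponents $p,q > 1$, together with the fact that the Poisson variable $i^X(h)$ with parameter $\lambda^X h \leq \lambda^X$ has all moments uniformly bounded on $(0,1]$ and that $\PR(i^X(h) \geq 1) \leq \lambda^X h$, then yields
\[
\E\bigl[(I^X_h/h)^\delta\, \textbf{1}_{i^X(h)\geq 1}\bigr] \leq C\, h^{(2D-1)\delta + 1/q}.
\]
Choosing $q > 1$ with $1/q > \delta (1-2D)$, which is possible exactly because $\delta < 1/(1-2D)$, makes this contribution bounded too.

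The main obstacle and the source of the sharp threshold $\delta < 1/(1-2D)$ is the regime where $U$ is of order smaller than or comparable to $h$: the concavity bound $U^{2D-1}$ blows up and only the probability factor $\PR(U \lesssim h) \lesssim h$ compensates the divergence $h^{(2D-1)\delta}$, forcing $(2D-1)\delta + 1 \geq 0$. The concavity/subadditivity dichotomy is precisely what makes this trade-off work in a single line.
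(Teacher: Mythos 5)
Your proof is correct and follows essentially the same route as the paper's: both decompose according to the number of Poisson points in $(0,h]$, use the concavity (tangent-line) bound to reduce the no-jump contribution to the finiteness of $\E\left[(-\tau_0)^{(2D-1)\delta}\right]$ for an exponential $-\tau_0$ — which is exactly where the threshold $\delta<\frac{1}{1-2D}$ arises — and beat the divergence $h^{(2D-1)\delta}$ on the jump event by the $O(h)$ probability of seeing a point. The remaining differences are cosmetic (the paper conditions on each value of $i(h)$ and uses Jensen on the gaps where you lump $i(h)\geq 1$ together and apply H\"older, and your extra split of $\{i(h)=0\}$ at $U=ch$ is redundant for the stated open range of $\delta$, since the tangent bound alone already covers it), so no gap to report.
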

\begin{proof}
Recall
$$
I_t=\bar{\s}^2\left[(t-\tau_{i(t)})^{2D}+\sum^{i(t)}_{k=1}(\tau_k-\tau_{k-1})^{2D}
-(-\tau_0)^{2D}\right]
$$
and decompose $\E(I_t^\delta)$
$$
\E(I_t^\delta)=\E(I_t^\delta|i(t)=0)\PR(i(t)=0)+\sum_{k=1}^\infty \E(I_t^\delta|i(t)=k)\PR(i(t)=k)
$$
Conditioning on $i(t)=0$ and using convexity,
$$
I_t = \bar{\s}^2\left[(t-\tau_0)^{2D}-(-\tau_0)^{2D}\right] \leq 2D \bar{\s}^2 (-\tau_0)^{2D-1} t
$$
in a right neighborhood of $t=0$. 
So
$$
\E(I_t^\delta|i(t)=0) \leq (2D)^\delta \bar{\s}^{2\delta}\E\left((-\tau_0)^{\delta(2D-1)}\right) t^\delta
\leq C_0 t^\delta
$$
for $\delta<\frac{1}{1-2D}$, since $-\tau_0$ is an random variable with exponential distribution.
Conditioning on $i(t)=k$, $k\geq 1$, and using convexity again,
\begin{align*}
I_t
&\leq \bar{\s}^2\left[ (t-\tau_k)^{2D}+\sum^k_{j=2}(\tau_j-\tau_{j-1})^{2D}+
(t-\tau_0)^{2D}-(-\tau_0)^{2D}\right]\\
&\leq \bar{\s}^2\left[(t-\tau_k)^{2D}+\sum^k_{j=2}(\tau_j-\tau_{j-1})^{2D}+
2D (-\tau_0)^{2D-1} t\right].
\end{align*}
By Jensen inequality and the fact that $2D<1$, 
\begin{align*}
(t-\tau_k)^{2D}+\sum^k_{j=2}(\tau_j-\tau_{j-1})^{2D}\leq k\,\left(\frac{(t-\tau_k)+\sum^k_{j=2}(\tau_j-\tau_{j-1})}{k}\right)^{2D}\leq  k\left(\frac{t}{k}\right)^{2D}
\end{align*}
Then
$$
I_t \leq \bar{\s}^2 \left( 2D(-\tau_0)^{2D-1}t+k\left(\frac{t}{k}\right)^{2D}\right).
$$
Now, supposing $t\leq 1$, we have that for suitable positive constants $C_1$ and $C_2$
\begin{align*}
\E(I_t^\delta|i(t)=k)\PR(i(t)=k)\leq C_1  \frac{\lambda^k}{k!}t^\delta + 
C_2 k^{\delta(1-2D)} \frac{\lambda^k}{k!}t^{1+2D\delta}.
\end{align*}
Recall $\delta<\frac{1}{1-2D}$. Therefore $\delta < 1+2D\delta$, so $t^{1+2D\delta}\leq t^\delta$, and then
$$
\E(I_t^\delta|i(t)=k)\PR(i(t)=k)\leq \left(C_1+C_2 k^{\delta(1-2D)}\right)\frac{\lambda^k}{k!}t^\delta.
$$
Therefore
$$
\E(I_t^\delta)\leq 
\left[ C_0 +\sum_{k=1}^\infty C_3 \frac{\lambda^k}{k!}\right] t^\delta
\leq C_4 t^\delta
$$
where $C_3$ and $C_4$ are positive constants. 
So $\left\{\frac{I^X_t}{t}:t\in(0,1] \right\}$ is bounded in $L^\delta$.
\end{proof}

\section{Acknowledgements}  
We thank Paolo Dai Pra for his constant guidance and advice.

\bibliographystyle{abbrv}
\bibliography{bibliografia}

\end{document}